\documentclass[singlecolumn,         
               showpacs,            
               preprintnumbers,     
               aps,                 
               prd,          	    
               a4paper,             
               superscriptaddress,      
               nofootinbib,         
               tightenlines,        
               floats,floatfix,11pt
               ]{revtex4-2}              
\usepackage{graphicx}  
\usepackage[margin=0.8in]{geometry}
\usepackage{dcolumn}   
\usepackage{bm}  
\usepackage[normalem]{ulem}
\usepackage[utf8]{inputenc}
\usepackage{amsmath,amssymb}
\usepackage{amsthm}
\usepackage{soul}
\usepackage{float}
\usepackage[thinlines]{easytable}
\usepackage{array,booktabs}
\usepackage{lipsum} 
\usepackage{multirow}
\usepackage{placeins} 
\usepackage[colorlinks=true,linkcolor=blue,citecolor=blue]{hyperref}
\usepackage{subcaption}
\newtheorem{theorem}{Theorem}

\usepackage{array,makecell}
\usepackage{cleveref} 
\begin{document} 

\title{Hypersurface Anchored Variational Principle for General Relativity}
\author{Prasanta Sahoo }
\email{prasantmath123@yahoo.com} 
\affiliation{Midnapore College (Autonomous), Midnapore, West Bengal, India, 721101}


\begin{abstract}

A hypersurface anchored variational extension of General Relativity is formulated in which the Einstein-Hilbert action is supplemented by a diffeomorphism invariant functional supported on an embedded spacelike hypersurface whose embedding is varied independently of the spacetime metric. The resulting Euler-Lagrange system consists of the Einstein equations with a localized distributional contribution together with an anchoring equation determining admissible embeddings. For the admissible class of hypersurface functionals considered here, the bulk field equations retain the standard second order principal structure away from the hypersurface and no additional propagating bulk gravitational degrees of freedom are introduced. Under ellipticity and invertibility assumptions, local persistence and linear stability of anchoring hypersurfaces follow from standard implicit function and elliptic estimates. The anchoring condition is generically inequivalent to local slicing gauge conditions and therefore defines a genuine variational restriction rather than a coordinate choice. In the canonical formulation, the momentum constraints retain their standard form, whereas the Hamiltonian constraint acquires a hypersurface supported term. The corresponding smeared generators close in the weak sense: the Dirac algebra is recovered in the bulk, with deviations confined to localized distributional surface contributions. The construction therefore defines a constrained sector of the classical solution space of General Relativity consisting of spacetimes that admit at least one embedded hypersurface satisfying the anchoring equation. In homogeneous cosmology, the localized term induces matching conditions across the anchoring surface, allowing finite transitions in the expansion rate while preserving standard evolution away from the transition hypersurface.

\end{abstract}

\maketitle

\let\oldthefootnote\thefootnote
\renewcommand{\thefootnote}{\textit{}}
\footnotetext{Corresponding author: \href{mailto:prasantmath123@yahoo.com}{prasantmath123@yahoo.com}}
\let\thefootnote\oldthefootnote

\section{Introduction}

General Relativity describes gravitation as the dynamics of spacetime geometry governed by Einstein's field equations \cite{Einstein1915,Wald1984,Carroll2004,PoissonWill2014}. In its standard formulation, the theory admits a well posed Cauchy initial value problem: initial data specified on a spacelike hypersurface and satisfying the Hamiltonian and momentum constraints determine a unique maximal spacetime development up to diffeomorphism \cite{ChoquetBruhatGeroch1969,Ringstrom2009,Friedrich1985}. This formulation underlies most analytical and numerical investigations of relativistic systems and provides a complete local dynamical description. While the Cauchy formulation fully characterizes local evolution, it does not impose restrictions on the global class of admissible solutions beyond consistency with the field equations and constraints. In applications such as cosmology, physically relevant solutions correspond to a restricted subset of the space of admissible initial data, typically selected by additional assumptions on symmetry, regularity, or phenomenological input \cite{Penrose1979,Weinberg1989,Carroll2001}. More generally, several properties of gravitational spacetimes, including the presence of horizons, causal boundaries, and singularity structure, depend on global geometric features that are not determined solely by local initial data \cite{Penrose1965}. These observations motivate the consideration of frameworks in which global geometric conditions are incorporated directly at the level of the variational principle.

At the level of the action, General Relativity already exhibits sensitivity to global geometric structure. The Einstein-Hilbert action requires boundary contributions, most notably the Gibbons-Hawking-York term, to define a well posed variational principle \cite{York1972}. Hypersurfaces also play a role in several related constructions, including thin-shell and junction formalisms \cite{BarrabesIsrael1991}, geometric treatments of embedded hypersurfaces \cite{MarsSenovilla1993}, higher curvature extensions \cite{Myers1987,GravanisWillison2003}, and brane-world scenarios \cite{RandallSundrum1999a,RandallSundrum1999b,Dvali2000,Charmousis2002}. In these settings, hypersurfaces are introduced either as boundaries, as loci of distributional stress-energy, or as externally specified geometric structures. The canonical (ADM) formulation further emphasizes the role of hypersurfaces in the dynamical description. The Hamiltonian structure encodes the dynamics in terms of constraint equations generating diffeomorphism invariance \cite{Dirac1964,ReggeTeitelboim1974,Teitelboim1973,Kuchar1976}. These constraints govern the evolution of initial data but do not restrict the space of solutions beyond consistency with the field equations.

In the present work, a variational framework is developed in which an embedded spacelike hypersurface enters as an independent geometric variable of the action. The Einstein-Hilbert functional is supplemented by a diffeomorphism invariant hypersurface term whose embedding variables are varied independently of the spacetime metric. The resulting Euler-Lagrange system determines both the metric and admissible embeddings, so that the hypersurface is not prescribed a priori and is not identified with a foliation choice.

The hypersurface contribution induces a localized distributional term supported on the dynamically selected hypersurface. Away from this support, the equations reduce exactly to the Einstein equations, preserving the principal part and second order character of the bulk dynamics. Under suitable regularity assumptions, the coupled system is well defined in the weak sense and introduces no additional propagating bulk gravitational degrees of freedom. Instead, admissible solutions are restricted to spacetimes admitting at least one hypersurface satisfying the associated embedding equation.

For the admissible class of functionals considered here, the embedding equation is elliptic. Under standard invertibility assumptions on the linearized operator, local persistence and linear stability of anchoring hypersurfaces follow from standard implicit function and elliptic estimates.

The construction is distinct from boundary counterterms, thin-shell junction conditions, and gauge slicings such as constant mean curvature foliations \cite{Isenberg1978}. The anchoring condition arises from an independent variational equation and therefore defines a genuine constrained sector of General Relativity rather than a coordinate prescription.

At the canonical level, the momentum constraints retain their standard form, while the Hamiltonian constraint acquires a hypersurface supported contribution. The smeared generators close in the weak sense: the Dirac algebra is recovered in the bulk, with corrections confined to localized distributional surface terms. The embedding variable therefore acts as a quasi-local constraint variable rather than as an additional propagating mode.

As an illustration, homogeneous and isotropic cosmology is considered. The hypersurface term induces matching conditions across the anchoring surface, allowing finite discontinuities in the expansion rate while preserving standard evolution away from the transition hypersurface.

The paper is organized as follows. Section~\ref{sec:Structural_Limitations} discusses limitations of purely initial-value formulations. Section~\ref{sec:variational} develops the variational principle and field equations. Section~\ref{sec:Mathematical_Structure_and_Consistency} establishes consistency and existence results. Section~\ref{sec:adm} presents the Hamiltonian formulation and weak constraint algebra. Section~\ref{sec:Relation_to_Existing_Geometric_Constructions} analyzes the relation to existing constructions and the restricted solution sector. Section~\ref{sec:Perturbation} studies linear perturbations. Section~\ref{sec:Quantum_and_Thermodynamic_Implications} gives semiclassical remarks. Section~\ref{sec:conclusion} concludes.

\section{Structural Limitations of Initial Value Formulations}
\label{sec:Structural_Limitations}

General Relativity admits a well defined formulation as a Cauchy initial value problem through the Arnowitt-Deser-Misner (ADM) decomposition \cite{York1979,FischerMarsden1979}. In this formulation, initial data specified on a spacelike hypersurface $\Sigma_0$, subject to the Hamiltonian and momentum constraints, determine a unique maximal globally hyperbolic spacetime development up to diffeomorphism \cite{BartnikIsenberg2004}. This framework provides the basis for both analytical and numerical studies of relativistic spacetimes.

While the initial value formulation provides a complete local dynamical description, it does not impose restrictions on the global class of admissible solutions beyond consistency with the field equations and constraints. In cosmological applications, observationally relevant spacetimes correspond to a restricted subset of admissible initial data, typically selected by additional assumptions such as symmetry, regularity, or phenomenological input \cite{Penrose1989,CarrollChen2004}. This issue is closely related to broader problems in cosmology, including the cosmological constant problem and the origin of dark energy \cite{Weinberg1989,Joyce2015,Copeland2006}.

From a statistical perspective, attempts to characterize typical solutions using measures on phase space encounter well known ambiguities. Analyses based on Liouville type measures indicate that subsets of solutions exhibiting specific large scale properties, such as approximate homogeneity and isotropy, occupy a restricted region of the admissible configuration space \cite{RemmenCarroll2013}. More generally, defining physically meaningful probability measures on the space of cosmological solutions remains a nontrivial problem \cite{Freivogel2011,Guth2007,Ellis2012}. These considerations suggest that the field equations alone do not provide a mechanism for selecting physically distinguished solutions within the full solution space.

Several fundamental features of gravitational physics further illustrate the role of global structure. Phenomena such as event horizons, causal boundaries, and spacetime singularities are defined in terms of extended regions and global causal relations \cite{Geroch1970,Senovilla1998}. Although compatible with the Cauchy formulation, their characterization cannot be reduced to local data specified on a single hypersurface. This highlights the distinction between local dynamical evolution and global geometric properties of spacetime.

The thermodynamic arrow of time provides an additional example often discussed in this context. While the underlying dynamical laws are approximately time reversal invariant, observed cosmological evolution exhibits a pronounced temporal asymmetry. Within the standard initial value framework, this asymmetry is typically associated with special initial conditions imposed on a spacelike hypersurface \cite{Lebowitz1993,Callender2010}. From a statistical standpoint, low entropy configurations correspond to a restricted region of phase space with respect to Liouville type measures. However, a complete derivation of macroscopic irreversibility in gravitational systems remains an open problem, and no definitive selection principle is provided by the classical field equations alone.

These observations motivate the exploration of frameworks in which additional global conditions are incorporated at the level of the action, rather than imposed externally as initial or boundary conditions. The objective is not to replace the Cauchy formulation, but to supplement it with a mechanism that restricts the admissible class of solutions in a diffeomorphism invariant manner. In the following section, a hypersurface anchored variational principle is introduced, in which a dynamically determined embedded hypersurface arises from the variational problem and imposes a global constraint on admissible spacetime geometries.

\section{Hypersurface Anchored Variational Principle and Field Equations}
\label{sec:variational}

Motivated by the considerations of Sec.~\ref{sec:Structural_Limitations}, a variational framework is formulated in which an embedded spacelike hypersurface enters as an independent geometric variable. The configuration space consists of pairs $(g_{\mu\nu},X^\mu)$, where $g_{\mu\nu}$ is the spacetime metric and $X^\mu$ specifies the embedding of the hypersurface \cite{Gourgoulhon20072}.

Let $\mathcal M$ be a four dimensional Lorentzian manifold with metric $g_{\mu\nu}$ of regularity $C^2$ on $\mathcal M\setminus\Sigma_\star$, where $\Sigma_\star\subset\mathcal M$ is a smooth codimension one embedded spacelike hypersurface. The embedding is given by $X^\mu:\Sigma\rightarrow \mathcal M,~~ y^a\mapsto X^\mu(y^a)$, where $y^a$ $(a=1,2,3)$ are intrinsic coordinates on an abstract three manifold $\Sigma$. The induced metric and extrinsic curvature are
\begin{equation}
h_{ab}=g_{\mu\nu}\,\partial_aX^\mu\partial_bX^\nu,
\qquad
K_{ab}=h_a{}^\mu h_b{}^\nu \nabla_\mu u_\nu,
\label{eq:induceddata}
\end{equation}
where $u^\mu$ is the future directed unit normal satisfying $u_\mu u^\mu=-1$, and $h^\mu{}_\nu=\delta^\mu{}_\nu+u^\mu u_\nu$ is the tangential projector \cite{Poisson2004}. The action is taken to be
\begin{align}
S[g,X] & = \frac{1}{16\pi G}\int_{\mathcal{M}} d^{4}x\,\sqrt{-g} R +\lambda \mathcal{I}[\Sigma_{\star},g,X]  +S_{\rm m}[g,\Psi],
\label{eq:totalaction}
\end{align}
where $\lambda$ is a constant parameter and $\mathcal I$ is a hypersurface supported diffeomorphism invariant functional.

\subsection{Admissible hypersurface functionals}

Throughout the present work, the rigorous admissible class is restricted to functionals of the form
\begin{equation}
\mathcal{I}[\Sigma_{\star},g,X] = \int_{\Sigma_{\star}} d^{3}y \sqrt{h} 
F\left(h_{ab},K_{ab},\mathcal{R}[h]\right),
\label{eq:Iadmissible}
\end{equation}
where $\mathcal R[h]$ is the intrinsic Ricci scalar of $(\Sigma_\star,h_{ab})$, and $F$ is a smooth local scalar depending algebraically on its arguments. Hypersurface actions built from intrinsic and extrinsic geometry occur widely in gravitational boundary value problems and higher curvature theories. The following assumptions are imposed: $(i)$ $F$ contains no normal derivatives of $K_{ab}$; $(ii)$ $F$ contains no pullback of four dimensional curvature tensors such as $R[g]$; $(iii)$ under metric variation at fixed embedding, all derivatives of $\delta g_{\mu\nu}$ are tangential and removable by intrinsic integration by parts on $\Sigma_\star$; and $(iv)$ under normal embedding variation, the principal part of the Euler-Lagrange operator is elliptic on $\Sigma_\star$.

These conditions guarantee that the hypersurface term does not alter the bulk principal symbol of the Einstein equations. Extensions involving pullbacks of four dimensional curvature invariants require additional boundary analysis and are not considered in the rigorous formulation developed here. A representative admissible choice is
\begin{equation}
F=K+\alpha \mathcal{R}[h]+\beta K_{ab}K^{ab},
\label{eq:representativeF}
\end{equation}
with constants $(\alpha,\beta)$.

\subsection{Distributional representation}

The hypersurface functional may be written equivalently as
\begin{equation}
\mathcal{I}= \int_{\mathcal{M}} d^{4}x\,\sqrt{-g}\,\delta(\Sigma_{\star})\,\mathcal{L}_{\Sigma},
\label{eq:distributionalI}
\end{equation}
where $\mathcal{L}_{\Sigma}=\sqrt{h} F$ and $\delta(\Sigma_{\star})$ is the Dirac distribution defined by
\begin{equation}
\int_{\mathcal{M}} d^{4}x\,\sqrt{ -g} \delta(\Sigma_{\star}) \varphi
=
\int_{\Sigma_{\star}} d^{3}y \sqrt{h} \varphi(X(y))
\label{eq:deltaSigma}
\end{equation}
for every smooth compactly supported test function $\varphi$. Distributional hypersurface sources of codimension one are standard in gravitational junction theory \cite{Israel1966}. All expressions are interpreted in the weak sense. Since $\delta(\Sigma_\star)$ multiplies only quantities regular on $\Sigma_\star$, no products of singular distributions occur.

\subsection{Independent variations}

The variational principle is defined by $\delta S=0$ under independent variations of $g_{\mu\nu}$ and $X^\mu$. Embedding variations are decomposed as $\delta X^\mu=\Phi u^\mu+\xi^a\partial_aX^\mu$. Tangential deformations $\xi^a\partial_aX^\mu$ correspond to reparametrizations of $\Sigma_\star$ and therefore do not contribute to the Euler-Lagrange equation. The physical embedding variation is the normal deformation $\delta X^\mu=\Phi u^\mu$ \cite{MarsSenovilla1993}.

\subsection{Metric variation}

At fixed embedding,
\begin{equation}
\delta_{g} S = \frac{1}{16\pi G}\int_{\mathcal M} d^{4}x \sqrt{-g} G_{\mu\nu}\,\delta g^{\mu\nu}
+\lambda \delta_g\mathcal I
+\delta_g S_{\rm m}.
\label{eq:metricvariation1}
\end{equation}

For every admissible functional \eqref{eq:Iadmissible}, intrinsic integrations by parts on $\Sigma_\star$ yield
\begin{align}
\delta_g\mathcal I &=
\frac12\int_{\Sigma_\star} d^3y \sqrt h \tau^{ab} \delta h_{ab} = \int_{\mathcal M} d^4x \sqrt{-g} \delta(\Sigma_\star) \tau_{\mu\nu} \delta g^{\mu\nu},
\label{eq:metricvariationI}
\end{align}
where $\tau_{\mu\nu}=h_\mu{}^a h_\nu{}^b \tau_{ab}$ is tangential to $\Sigma_\star$. No second normal derivatives of $\delta g_{\mu\nu}$ arise. Hence the field equations are
\begin{equation}
G_{\mu\nu} +\lambda \delta(\Sigma_\star)\tau_{\mu\nu} = 8\pi G T_{\mu\nu},
\label{eq:modifiedeinstein}
\end{equation}
in the sense of distributions.

Away from $\Sigma_\star$, the localized term vanishes and the equations reduce exactly to Einstein's equations. Therefore the bulk principal part remains that of General Relativity. The induced hypersurface stress tensor is
\begin{equation}
T^{(\Sigma_\star)}_{\mu\nu} = -\frac{2}{\sqrt{-g}} \frac{\delta\mathcal I}{\delta g^{\mu\nu}},
\label{eq:surfstress}
\end{equation}
with functional derivative understood weakly. For the representative model
\begin{equation}
\mathcal I=\int_{\Sigma_\star} d^3y \sqrt h K,
\label{eq:IKmodel}
\end{equation}
one recovers the standard Brown-York type surface variation \cite{BrownYork1993}.

\subsection{Embedding variation}

Variation with respect to the embedding gives
\begin{equation}
\delta_X\mathcal I = \int_{\Sigma_\star} d^3y \sqrt h E[g,X] \Phi,
\label{eq:embeddingvariation}
\end{equation}
and therefore the anchoring equation
\begin{equation}
E[g_{\mu\nu},X^\mu]=0.
\label{eq:anchoreq}
\end{equation}

This equation is intrinsic to $\Sigma_\star$ and determines admissible embeddings. It is not satisfied by generic hypersurfaces. For functionals depending nontrivially on $K_{ab}$, the normal variation contains the standard term
\begin{equation}
\delta K = -\nabla^a\nabla_a\Phi - \left(K_{ab}K^{ab}+R_{\mu\nu}u^\mu u^\nu\right)\Phi,
\label{eq:deltaK}
\end{equation}
so that the principal part of $E=0$ is governed by the Laplace-Beltrami operator on $\Sigma_\star$ \cite{Andersson2008}. Under the admissibility assumptions, the embedding equation is therefore elliptic.

\subsection{Resulting system}

The coupled Euler-Lagrange system consists of the Einstein equations with a hypersurface supported distributional contribution \eqref{eq:modifiedeinstein}; and the anchoring equation \eqref{eq:anchoreq} for the embedding.
Accordingly, the bulk dynamics remain hyperbolic away from $\Sigma_\star$, whereas the embedding sector is elliptic on $\Sigma_\star$. The hypersurface acts as a quasi local geometric constraint on admissible solutions and introduces no additional propagating bulk degrees of freedom.

\section{Mathematical Structure and Consistency}
\label{sec:Mathematical_Structure_and_Consistency}

The hypersurface anchored construction must satisfy the standard requirements of a constrained variational theory: a well defined Euler-Lagrange system, compatibility with diffeomorphism invariance, preservation of the local propagating content of General Relativity in the bulk, and mathematical consistency of the hypersurface supported source term \cite{HenneauxTeitelboim1992}. These properties are formulated under the admissible class introduced in Sec.~\ref{sec:variational}.

Let $(M,g_{\mu\nu})$ be a four dimensional Lorentzian manifold and let $\Sigma_\star\subset M$ be a smooth embedded spacelike hypersurface specified by an embedding $X^\mu:\Sigma\rightarrow M$, where $\Sigma$ is a smooth three manifold. The configuration space is
\begin{equation}
\mathcal C= \left\{ \left.
\begin{array}{c}
(g_{\mu\nu},X^\mu)\\[1.2ex]
\end{array}
\right|\;
\begin{array}{l}
g_{\mu\nu}\in C^0(M)\cap C^2(M\setminus\Sigma_\star),\\[0.4ex]
X^\mu\in C^2(\Sigma,M)
\end{array}
\right\}.
\label{eq:configspace}
\end{equation}

The metric is assumed continuous across $\Sigma_\star$, with first derivatives piecewise locally bounded on each side. Intrinsic geometric quantities induced on $\Sigma_\star$, including $h_{ab}$ and $K_{ab}$, are therefore finite. The action is given by Eq.~(\ref{eq:totalaction}). The admissible class is chosen so that the hypersurface contribution acts as a localized geometric constraint without modifying the bulk principal dynamics. The defining properties are:

\begin{enumerate}
\item[(i)] Away from $\Sigma_\star$, the field equations reduce exactly to Einstein's equations.

\item[(ii)] The embedding equation is elliptic on $\Sigma_\star$ and introduces no independent hyperbolic sector.

\item[(iii)] Compactly supported variations $(\delta g_{\mu\nu},\delta X^\mu)$ produce a consistent Euler-Lagrange system.

\item[(iv)] The localized term $\delta(\Sigma_\star)\tau_{\mu\nu}$ is meaningful in $\mathcal D'(M)$, and no products of singular distributions occur \cite{GerochTraschen1987}.
\end{enumerate}

A representative admissible model is given by Eq.~(\ref{eq:representativeF}). The hypersurface contribution has the form
\begin{equation}
T^{(\Sigma_\star)}_{\mu\nu}
=
\delta(\Sigma_\star)\tau_{\mu\nu},
\label{eq:surfdiststress}
\end{equation}
where $\tau_{\mu\nu}$ is tangential to $\Sigma_\star$ and regular on the hypersurface.

Since the Einstein equations are nonlinear, distributional sources require controlled regularity. In the present setting, only codimension one Dirac terms multiplying bounded tensor coefficients arise. No quadratic products such as $\delta(\Sigma_\star)^2$ occur. Accordingly, the weak formulation is sufficient. Stronger nonlinear frameworks, such as Colombeau generalized geometry, may also be employed but are not required here \cite{SteinbauerVickers2006}. This construction differs from thin-shell geometries with jump discontinuities in extrinsic curvature, since the localized term originates from the variational functional and the metric itself is taken continuous across $\Sigma_\star$. Under the above assumptions, the stationarity condition $\delta S=0$ yields
\begin{equation}
G_{\mu\nu} +\lambda\,T^{(\Sigma_\star)}_{\mu\nu}
= 8\pi G\,T_{\mu\nu},
\qquad
E[g_{\mu\nu},X^\mu]=0,
\label{eq:weakELsystem}
\end{equation}
in the sense of distributions. The first equation governs the spacetime metric, while the second determines admissible embeddings. Because $T^{(\Sigma_\star)}_{\mu\nu}$ is supported only on $\Sigma_\star$, the principal symbol of the bulk Einstein operator is unchanged in every open set disjoint from $\Sigma_\star$.

The action is invariant under simultaneous diffeomorphisms of $g_{\mu\nu}$ and $X^\mu$. Therefore the Euler-Lagrange system satisfies the weak Noether identity
\begin{equation}
\nabla^\mu \left(G_{\mu\nu} +\lambda T^{(\Sigma_\star)}_{\mu\nu} -8\pi G T_{\mu\nu} \right) =0
\quad \text{in }\mathcal D'(M).
\label{eq:weaknoether}
\end{equation}

Terms involving derivatives of $\delta(\Sigma_\star)$ cancel against the embedding equation $E=0$. Hence no additional independent conservation law is required beyond the coupled Euler-Lagrange system. A natural question is whether anchored hypersurfaces persist under sufficiently small perturbations of a background metric.

\begin{theorem}[Local persistence by the implicit function theorem]
\label{thm:IFT}
Let $(M,g^{(0)}_{\mu\nu})$ be a smooth spacetime admitting a smooth embedded spacelike hypersurface with embedding $X_0^\mu$ satisfying $E[g^{(0)},X_0]=0$. Assume:

\begin{enumerate}
\item[(i)] $E[g,X]$ defines a $C^1$ map 
\[
E:\mathcal U\subset\mathcal G\times\mathcal X\rightarrow\mathcal Y,
\]
where $\mathcal G$ is a neighborhood of $g^{(0)}$ in $C^{k,\alpha}$, $\mathcal X$ is a neighborhood of $X_0$ in $C^{k+2,\alpha}$, and $\mathcal Y=C^{k,\alpha}(\Sigma)$;

\item[(ii)] the linearization with respect to the embedding variable,
\[
D_XE[g^{(0)},X_0]:
C^{k+2,\alpha}(\Sigma)\rightarrow C^{k,\alpha}(\Sigma),
\]
is an elliptic isomorphism.
\end{enumerate}

Then there exists a neighborhood $\mathcal V\subset\mathcal G$ of $g^{(0)}$ and a unique $C^1$ map $g\mapsto X(g)$ such that $E[g,X(g)]=0$ for all $g\in\mathcal V$. The solution is unique modulo tangential reparametrizations.
\end{theorem}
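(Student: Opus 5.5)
The plan is to apply the Banach-space implicit function theorem after removing the tangential gauge freedom, so that the embedding variable becomes a single scalar normal displacement.

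\emph{Step 1: gauge-fixing.} Fix the reference embedding $X_0$ with future unit normal $u_0^\mu$ with respect to $g^{(0)}$. Since $\Sigma_\star$ is smooth and spacelike, a tubular neighborhood supplies the normal exponential map
\[
\Phi\mapsto X_\Phi(y)=\exp_{X_0(y)}\bigl(\Phi(y)\,u_0(y)\bigr).
\]
For $\Phi$ small in $C^{k+2,\alpha}(\Sigma)$, every embedding $C^{k+2,\alpha}$-close to $X_0$ is, after a unique $C^{k+2,\alpha}$ reparametrization of $\Sigma$, of the form $X_\Phi$. This is the standard graph representation near an embedded submanifold (locally, or with compactness or uniform control on $\Sigma$). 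The tangential components $\xi^a$ of the decomposition $\delta X^\mu=\Phi u^\mu+\xi^a\partial_aX^\mu$ are thereby quotiented out. This is what the final sentence of the statement means by uniqueness modulo tangential reparametrizations. Geometric invariance of $E$ means that $E[g,X\circ\psi]=E[g,X]\circ\psi$ for diffeomorphisms $\psi$ of $\Sigma$. Hence the zero set is correctly described by $\widetilde E(g,\Phi):=E[g,X_\Phi]$. By hypothesis (i) and the smoothness of the exponential map, $\widetilde E$ is $C^1$ from a neighborhood of $(g^{(0)},0)$ in $C^{k,\alpha}\times C^{k+2,\alpha}(\Sigma)$ to $C^{k,\alpha}(\Sigma)$. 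It satisfies $\widetilde E(g^{(0)},0)=0$.

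\emph{Step 2: the linearization.} By the chain rule, $D_\Phi\widetilde E(g^{(0)},0)$ equals $D_XE[g^{(0)},X_0]$ applied to normal variations $\Phi u_0^\mu$. This is precisely the operator in hypothesis (ii). For the representative model its principal part is the Laplace--Beltrami operator coming from Eq.~(\ref{eq:deltaK}). By assumption it is a Banach isomorphism $C^{k+2,\alpha}\to C^{k,\alpha}$.

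\emph{Step 3: conclusion.} The implicit function theorem yields neighborhoods $\mathcal V\ni g^{(0)}$ and $\mathcal W\ni 0$ together with a unique $C^1$ map $g\mapsto\Phi(g)\in\mathcal W$ such that $\widetilde E(g,\Phi(g))=0$. We then set $X(g)=X_{\Phi(g)}$. Now suppose $X$ is any other nearby solution. Step 1 writes it as $X_\Phi\circ\psi$, and by invariance $\widetilde E(g,\Phi)=0$. Uniqueness in the implicit function theorem then forces $\Phi=\Phi(g)$, which gives uniqueness modulo reparametrization.

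\emph{Main obstacle.} I expect the delicate point to be Step 1 together with the compatibility of regularity. Three issues arise.
\begin{enumerate}
\item[(a)] The slice map $\Phi\mapsto X_\Phi$ must be shown to be a $C^1$ local chart transverse to the reparametrization orbits in the Hölder topology. Composition with diffeomorphisms loses derivatives, so the reparametrization $\psi$ should be obtained by a separate argument: solve for the foot-point by a finite-dimensional implicit function theorem, pointwise along normal geodesics.
\item[(b)] Hypothesis (i) must actually hold. $E$ involves $K_{ab}$ and $R_{\mu\nu}u^\mu u^\nu$, which contain up to second derivatives of $g$. With $g\in C^{k,\alpha}$, the output then lies only in $C^{k-2,\alpha}$ unless the indexing is reinterpreted. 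I would therefore take (i) as given, as the statement does. Where necessary I would remark that $\mathcal G$ should be understood with two extra derivatives, for instance $C^{k+2,\alpha}$, or that the normal direction $u$ should be taken with respect to $g$ rather than $g^{(0)}$.
\item[(c)] The conclusion is local in $\Sigma$ unless $\Sigma$ is compact, or unless weighted spaces are used.
\end{enumerate}
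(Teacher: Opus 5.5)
Your argument is correct, and it takes a more careful route than the paper. The paper applies the Banach implicit function theorem directly to $E$ on $\mathcal G\times\mathcal X$, treating $D_XE[g^{(0)},X_0]$ as an isomorphism. It then disposes of reparametrizations with the remark that they act trivially on the geometric hypersurface. Read literally, that cannot work. Equivariance gives $D_XE[g^{(0)},X_0](\xi^a\partial_aX_0)=\xi^a\partial_aE[g^{(0)},X_0]=0$, so on the full space of embedding variations the derivative has an infinite-dimensional kernel. The asserted uniqueness of the solution map into $\mathcal W\subset\mathcal X$ is also false: $g\mapsto X(g)\circ\psi$ is another such map for any fixed diffeomorphism $\psi$ near the identity. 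Hypothesis (ii), posed on the scalar space $C^{k+2,\alpha}(\Sigma)$, only makes sense on normal displacements. Your normal-graph slice is exactly what turns it into the derivative of the map to which the theorem is applied. Your equivariance argument in Step 3 then gives the exact content of ``unique modulo tangential reparametrizations.'' The paper gains brevity by leaving this reduction implicit; your version makes the hypotheses and the uniqueness claim consistent.

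Three smaller remarks. In (a) you need less than you fear. Existence uses only $\Phi\mapsto X_\Phi$, which is a left composition with a fixed smooth map and hence smooth on H\"older spaces. For uniqueness, let $(\pi,r)$ be the inverse of $(y,s)\mapsto\exp_{X_0(y)}(s\,u_0(y))$. You only need that $\psi=\pi\circ X$ is a diffeomorphism and that $\Phi=r\circ X\circ\psi^{-1}$ is small in $C^{k+2,\alpha}$. Both follow from composition bounds, not from differentiability of the quotient map.

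Your derivative count in (b) is a fair criticism of hypothesis (i). Taking the normal with respect to $g$ does not address it. The fixed $u_0$ is the better choice, since it keeps the slice independent of $g$.

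Finally, the aside in Step 2 that for $F=K$ the principal part is Laplace--Beltrami (repeated from the paper) is not needed, and it is not right. Take compactly supported $\Phi$, with $\delta\sqrt h=\sqrt h\,K\Phi$ and Eq.~\eqref{eq:deltaK}. The $\nabla^a\nabla_a\Phi$ term then integrates to zero in $\delta_X\mathcal I$, so $E=K^2-K_{ab}K^{ab}-R_{\mu\nu}u^\mu u^\nu$. Its linearization has principal symbol proportional to $(Kh^{ab}-K^{ab})\xi_a\xi_b$, which is degenerate, for example, wherever $K_{ab}=0$. So (ii) is a genuine assumption for that model, not a consequence of it.
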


\begin{proof}
Since $E$ is continuously Fr\'echet differentiable and $D_XE[g^{(0)},X_0]$ is an isomorphism, the Banach space implicit function theorem applies \cite{Zeidler1986}. This yields neighborhoods $\mathcal V\subset\mathcal G$ and $\mathcal W\subset\mathcal X$, together with a unique $C^1$ map $\Phi:\mathcal V\to\mathcal W$ satisfying $E[g,\Phi(g)]=0$. Setting $X(g)=\Phi(g)$ proves the result. Tangential reparametrizations act trivially on the geometric hypersurface.
\end{proof}

\begin{theorem}[Distributional consistency]
\label{thm:distributionalconsistency}
Let $(M,g_{\mu\nu})$ satisfy:

\begin{enumerate}
\item[(i)] $g_{\mu\nu}\in C^0(M)\cap C^2(M\setminus\Sigma_\star)$;

\item[(ii)] first derivatives of $g_{\mu\nu}$ are piecewise locally bounded;

\item[(iii)] the induced metric on $\Sigma_\star$ is nondegenerate;

\item[(iv)] $\tau_{\mu\nu}\in L^\infty(\Sigma_\star)$.
\end{enumerate}

Then:

\begin{enumerate}
\item[(a)] the Einstein tensor $G_{\mu\nu}$ defines a tensor valued distribution in $\mathcal D'(M)$;

\item[(b)] the localized term $\delta(\Sigma_\star)\tau_{\mu\nu}$ defines a tensor valued distribution;

\item[(c)] the equation
\[
G_{\mu\nu}
+\lambda\delta(\Sigma_\star)\tau_{\mu\nu}
=
8\pi G\,T_{\mu\nu}
\]
is well defined in weak form.
\end{enumerate}
\end{theorem}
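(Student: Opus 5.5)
The plan follows the Geroch--Traschen approach to distributional curvature for metrics that are continuous, with locally bounded (possibly discontinuous) first derivatives. Part (a) carries the real content. Parts (b) and (c) then follow almost directly from the definition \eqref{eq:deltaSigma} of $\delta(\Sigma_\star)$ and the linearity of the weak formulation.

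For (a), I would first note that under hypotheses (i)--(iii) the metric is continuous and nondegenerate, so $g^{\mu\nu}$ and $\sqrt{-g}$ are continuous. Since $\partial g\in L^\infty_{\rm loc}$ piecewise, the Christoffel symbols $\Gamma^\lambda_{\mu\nu}$ lie in $L^\infty_{\rm loc}(M)$. I would then write the Riemann tensor in the form $\partial\Gamma+\Gamma\Gamma$, and define it by pairing against a smooth compactly supported test tensor $\varphi$:
\begin{equation*}
\langle R^\rho{}_{\sigma\mu\nu},\varphi\rangle
=\int_M\Big(-\Gamma^\rho_{\nu\sigma}\,\partial_\mu\varphi+\Gamma^\rho_{\mu\sigma}\,\partial_\nu\varphi
+(\Gamma\Gamma)\varphi\Big)\,d^4x .
\end{equation*}
Every term on the right-hand side is an $L^\infty_{\rm loc}$ function times a smooth compactly supported function, so the pairing is finite and continuous in $\varphi$. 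This is exactly the Geroch--Traschen regularity class. Contracting to the Ricci tensor only involves continuous coefficients $g^{\mu\nu}$, which may multiply the distribution because it has order one with $L^\infty$ coefficients; the same holds for forming $R$ and $G_{\mu\nu}$. I expect this step to be the main obstacle. The delicate point is justifying the multiplication of a first-order distribution $\partial\Gamma$ by a merely continuous $g^{\mu\nu}$. I would handle it by moving the contraction inside before differentiating: write $g^{\sigma\nu}\partial_\mu\Gamma=\partial_\mu(g^{\sigma\nu}\Gamma)-(\partial_\mu g^{\sigma\nu})\Gamma$. Both terms are then well defined, since $\partial g^{\sigma\nu}$ is bounded. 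As a consistency check, I would split the integral over the two sides of $\Sigma_\star$ and integrate by parts. This recovers the classical bulk Einstein tensor on $M\setminus\Sigma_\star$ plus a $\delta(\Sigma_\star)$-supported term built from the jump $[\partial g]$, so no $\delta^2$ terms appear.

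For (b), I would define $\langle\delta(\Sigma_\star)\tau_{\mu\nu},\varphi^{\mu\nu}\rangle=\int_{\Sigma_\star}\sqrt h\,\tau_{\mu\nu}\varphi^{\mu\nu}(X(y))\,d^3y$ via \eqref{eq:deltaSigma}. Nondegeneracy (iii) makes $\sqrt h$ continuous and positive. Smoothness of the embedding gives a well-defined restriction of $\varphi$ to $\Sigma_\star$, with compact support there, and $\tau\in L^\infty$ bounds the pairing by $C\sup|\varphi|$. This yields a distribution of order zero.

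For (c), the equation is a linear combination of the two distributions from (a) and (b) together with $T_{\mu\nu}$, which is assumed locally integrable or itself distributional. It is therefore meaningful as an identity in $\mathcal D'(M)$, tested against all $\varphi^{\mu\nu}$. Finally, I would remark that no nonlinear operation is ever applied to $\delta(\Sigma_\star)$.
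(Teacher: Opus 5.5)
Your proposal is correct and follows the same route as the paper: the locally bounded Christoffel symbols give the curvature as weak derivatives, and $\delta(\Sigma_\star)\tau_{\mu\nu}$ is the order-zero distribution obtained by restricting test tensors to $\Sigma_\star$ with an $L^\infty$ weight. Your Leibniz-rule treatment of the contraction with the merely Lipschitz $g^{\mu\nu}$, and your explicit assumption that $T_{\mu\nu}$ is itself a distribution, fill in points the paper's proof skips (only your optional jump-decomposition check would need one-sided limits of $\partial g$ on $\Sigma_\star$, which hypothesis (ii) alone does not guarantee).
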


\begin{proof}
Because $g_{\mu\nu}$ is continuous and $C^2$ away from $\Sigma_\star$, the Christoffel symbols are piecewise locally bounded and smooth off $\Sigma_\star$. Their weak derivatives therefore exist as distributions, implying that the curvature tensors and Einstein tensor are distributionally well defined.

Since $\Sigma_\star$ is smooth, the Dirac measure on $\Sigma_\star$ acts on test tensors by restriction. Because $\tau_{\mu\nu}\in L^\infty(\Sigma_\star)$, the product $\delta(\Sigma_\star)\tau_{\mu\nu}$ is a well defined distribution. The sum of the two distributional terms is therefore meaningful in $\mathcal D'(M)$.
\end{proof}

The localized hypersurface sector does not introduce additional canonical bulk modes. Away from $\Sigma_\star$, the theory reduces locally to the ADM formulation of General Relativity with four first class constraints acting on the twelve canonical variables $(h_{ij},\pi^{ij})$. Hence the number of propagating bulk gravitational modes remains
\begin{equation}
N_{\rm dof}
=
\frac12(12-2\times4)=2.
\label{eq:bulkdof}
\end{equation}

The embedding variable enters only through the elliptic condition $E=0$ and therefore acts as a quasi local constraint variable rather than an independent propagating field.

Within the admissible class considered here, the hypersurface anchored construction defines a mathematically consistent constrained sector of General Relativity: the local hyperbolic bulk dynamics are unchanged, the localized source term is distributionally well defined, anchoring hypersurfaces persist under small perturbations when the elliptic linearization is invertible, and no extra propagating bulk degrees of freedom arise.

\section{Hamiltonian Structure and Modified ADM Constraints}
\label{sec:adm}

The hypersurface anchored variational principle admits a canonical formulation closely related to the Arnowitt-Deser-Misner decomposition \cite{Arnowitt2008}. The localized hypersurface functional modifies the Hamiltonian constraint only on the support of $\Sigma_\star$, while the bulk canonical structure remains that of General Relativity.

Assume that $(\mathcal M,g_{\mu\nu})$ admits a foliation by spacelike slices $\Sigma_t$ with lapse $N$ and shift $N^i$. In canonical variables $(h_{ij},\pi^{ij})$, the action takes the formal form
\begin{equation}
S=
\int dt
\left[
\int_{\Sigma_t} d^3x
\left(
\pi^{ij}\dot h_{ij}
-
N\mathcal H
-
N^i\mathcal H_i
\right)
-\lambda\,\mathcal I[\Sigma_\star]
\right],
\label{eq:canonicalaction}
\end{equation}
where
\begin{equation}
\pi^{ij}=\sqrt h\,(K^{ij}-Kh^{ij}).
\label{eq:canonicalmomentum}
\end{equation}

The hypersurface contribution is interpreted weakly whenever it is written as a distribution supported on $\Sigma_\star$. Variation with respect to $N$ and $N^i$ yields
\begin{equation}
\mathcal H=\mathcal H_{\rm ADM}+\lambda\,\mathcal H_{\Sigma_\star},
\qquad
\mathcal H_i=\mathcal H_i^{\rm ADM},
\label{eq:modifiedconstraints}
\end{equation}
where $\mathcal H_{\Sigma_\star}$ is supported on $\Sigma_t\cap\Sigma_\star$.

Thus the momentum constraint retains its standard form, reflecting spatial covariance of the hypersurface functional, whereas only the Hamiltonian constraint acquires a localized correction.

The canonical Poisson brackets are
\begin{equation}
\{h_{ij}(x),\pi^{kl}(y)\}
=
\delta_i^{(k}\delta_j^{l)}\delta^{(3)}(x-y).
\label{eq:pbasic}
\end{equation}

Introduce the smeared generators
\begin{equation}
H[N]=\int_{\Sigma_t} d^3x\,N\,\mathcal H,
\qquad
D[\vec N]=\int_{\Sigma_t} d^3x\,N^i\mathcal H_i.
\label{eq:smearedgens}
\end{equation}
Then
\begin{equation}
H[N]=H_{\rm ADM}[N]+\lambda H_{\Sigma_\star}[N].
\label{eq:Hsplit}
\end{equation}

Since $\mathcal H_i=\mathcal H_i^{\rm ADM}$, the momentum generators satisfy the standard relations
\begin{equation}
\{D[\vec N],D[\vec M]\}
=
D[\mathcal L_{\vec N}\vec M],
\label{eq:DD}
\end{equation}
and
\begin{equation}
\{H[N],D[\vec M]\}
=
H[\mathcal L_{\vec M}N].
\label{eq:HD}
\end{equation}

Hence spatial diffeomorphism invariance is preserved exactly.

The Hamiltonian-Hamiltonian bracket becomes
\begin{align}
\{H[N],H[M]\}
&=
\{H_{\rm ADM}[N],H_{\rm ADM}[M]\} +\lambda\{H_{\Sigma_\star}[N],H_{\rm ADM}[M]\}-\lambda\{H_{\Sigma_\star}[M],H_{\rm ADM}[N]\}
\nonumber\\
&\quad
+\lambda^2\{H_{\Sigma_\star}[N],H_{\Sigma_\star}[M]\}.
\label{eq:HHfull}
\end{align}

For admissible hypersurface functionals depending locally on $(h_{ij},\pi^{ij})$ and containing no derivatives normal to $\Sigma_\star$, the self bracket term is supported entirely on $\Sigma_\star$ and vanishes identically for the representative linear model $F=K$. The first term yields the standard Dirac relation,
\begin{equation}
\{H_{\rm ADM}[N],H_{\rm ADM}[M]\}
=
D\!\left[h^{ij}(N\partial_jM-M\partial_jN)\right].
\label{eq:diracadm}
\end{equation}

The mixed brackets are distributions supported on $\Sigma_\star$, since every functional derivative of $H_{\Sigma_\star}$ is localized there. Consequently, the full bracket takes the weakly closed form
\begin{align}
\{H[N],H[M]\}& =
D\!\left[h^{ij}(N\partial_jM-M\partial_jN)\right] + \lambda\,D_{\Sigma_\star}[N,M],
\label{eq:weakclosed}
\end{align}
where $D_{\Sigma_\star}[N,M]$ is a distribution supported on $\Sigma_\star$ and bilinear in $(N,M)$. Equation \eqref{eq:weakclosed} shows that the standard Dirac algebra is recovered in every open region disjoint from $\Sigma_\star$, while the hypersurface contribution appears only as a localized correction \cite{IshamKuchar1985}. For the representative model given by Eq.~(\ref{eq:IKmodel}), the localized generator is linear in the canonical momentum. In that case the mixed bracket reduces, after tangential integration by parts on $\Sigma_\star$, to an intrinsic surface divergence term, establishing weak closure explicitly. The embedding variables $X^\mu$ enter the canonical description only through the anchoring equation
\begin{equation}
E[g_{\mu\nu},X^\mu]=0.
\label{eq:anchham}
\end{equation}
Since this equation is elliptic on $\Sigma_\star$, it constrains admissible data quasi locally and introduces no new propagating canonical sector \cite{GilbargTrudinger2001,Taylor2011}.

\begin{theorem}[Weak preservation of the ADM constraint structure]
\label{thm:weakclosure}
Assume that the hypersurface functional is admissible, depends locally on $(h_{ij},\pi^{ij})$, and contains no time derivatives of the embedding variables. Then:

\begin{enumerate}
\item[(i)] in every open region disjoint from $\Sigma_\star$, the canonical constraints satisfy the standard ADM first class algebra;

\item[(ii)] the full smeared constraint algebra closes in the weak sense, with deviations from the Dirac algebra supported only on $\Sigma_\star$;

\item[(iii)] no additional bulk propagating canonical degrees of freedom are introduced.
\end{enumerate}
\end{theorem}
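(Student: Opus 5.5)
The plan is to prove the three assertions of Theorem~\ref{thm:weakclosure} in order, building directly on the decomposition \eqref{eq:Hsplit} and the bracket expansion \eqref{eq:HHfull}.

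\textbf{Step 1: item (i).} Take an open region $U\subset\Sigma_t$ with $U\cap\Sigma_\star=\emptyset$, and restrict to lapse and shift functions supported in $U$. Since $\mathcal H_{\Sigma_\star}$ is supported on $\Sigma_t\cap\Sigma_\star$, the functional $H_{\Sigma_\star}[N]$ vanishes identically for such $N$. Its functional derivatives with respect to $(h_{ij},\pi^{ij})$ are localized on $\Sigma_\star$, so they vanish there as well. Hence every $\lambda$ and $\lambda^2$ term in \eqref{eq:HHfull} drops out. The brackets \eqref{eq:DD} and \eqref{eq:HD} hold because $\mathcal H_i=\mathcal H_i^{\rm ADM}$, and the $HH$ bracket reduces to \eqref{eq:diracadm}. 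Together these give the standard first class Dirac algebra on $U$.

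\textbf{Step 2: item (ii).} Work with general smearings and treat \eqref{eq:HHfull} term by term.

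\begin{itemize}
\item[(a)] The ADM--ADM bracket gives \eqref{eq:diracadm}.
\item[(b)] For the mixed terms, write each Poisson bracket as an integral of products of functional derivatives, using \eqref{eq:pbasic}. Each such integrand contains a factor $\delta H_{\Sigma_\star}/\delta h_{ij}$ or $\delta H_{\Sigma_\star}/\delta\pi^{ij}$. By admissibility assumption (iii) of Sec.~\ref{sec:variational}, these factors are of the form $\delta(\Sigma_\star)$ times tangential tensors that are regular on $\Sigma_\star$. Pairing them with the ADM derivatives therefore gives a distribution supported on $\Sigma_\star$ and bilinear in $(N,M)$. The derivatives of $N$ and $M$ that appear are tangential and can be moved by intrinsic integration by parts on $\Sigma_\star$, with no normal derivatives and no $\delta(\Sigma_\star)^2$ products.
\item[(c)] The self bracket likewise pairs two surface-localized derivatives. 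Because $F$ has no normal derivatives, this pairing is a single surface integral rather than a square of delta functions.
\end{itemize}

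Collecting the surface-supported pieces into $D_{\Sigma_\star}[N,M]$ gives \eqref{eq:weakclosed}. Weak closure then means that, on the constraint surface $\mathcal H\approx0$, $\mathcal H_i\approx0$ together with the anchoring equation \eqref{eq:anchham}, the correction is either proportional to the constraints or a pure intrinsic surface divergence. I would verify this explicitly for $F=K$, where $H_{\Sigma_\star}$ is linear in $\pi^{ij}$, and argue analogously for the admissible class.

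\textbf{Step 3: item (iii).} The embedding carries no time derivatives, so it has no conjugate momentum and contributes nothing to the symplectic form \eqref{eq:pbasic}. The canonical pairs are therefore still $(h_{ij},\pi^{ij})$, with four constraints per point in the bulk. Applying Step 1 on every open set disjoint from $\Sigma_\star$ gives the count \eqref{eq:bulkdof}. On $\Sigma_\star$ itself, the embedding is fixed by the elliptic equation $E=0$. By Theorem~\ref{thm:IFT}, under the invertibility assumption it is locally determined as a function of the metric, so it adds no free Cauchy data and no propagating sector.

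\textbf{Main obstacle.} The hardest step is (ii): showing that the surface-supported remainder is genuinely weakly vanishing, or at least a closed structure, rather than merely localized. For general admissible $F$ that depends nonlinearly on $\pi^{ij}$, the self bracket may not vanish. My first attempt would be to show that the antisymmetrized combination in $(N,M)$ reduces to $\int_{\Sigma_\star}\sqrt h\,D_a V^a[N,M]$ using the intrinsic contracted Gauss--Codazzi relations on $\Sigma_\star$, and then to use $E=0$ to absorb any normal-variation remainder. Failing that, I would weaken the claim to say that the correction is a well-defined distribution supported on $\Sigma_\star$, which is all that \eqref{eq:weakclosed} asserts.
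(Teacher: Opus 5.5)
Your route is the paper's own. You localize the functional derivatives of $H_{\Sigma_\star}$ on $\Sigma_\star$ to get (i) and the support statement in (ii), then count $\tfrac12(12-2\times4)$ with no momentum conjugate to $X^\mu$ for (iii). Step~1 (taking $N,\vec N$ supported in $U$, so that $H_{\Sigma_\star}[N]\equiv0$ on phase space) is a cleaner formalization of the paper's ``the action reduces locally to ADM,'' and Step~3 is fine.

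\textbf{The gap is Step~2(c).} It claims more than localization yields, and for a wrong reason. In $\{H_{\Sigma_\star}[N],H_{\Sigma_\star}[M]\}=\int d^3x\,\bigl(\frac{\delta H_{\Sigma_\star}[N]}{\delta h_{ij}}\frac{\delta H_{\Sigma_\star}[M]}{\delta\pi^{ij}}-(N\leftrightarrow M)\bigr)$, each factor carries its own $\delta(\Sigma_\star)$ at the same point. The integrand is therefore formally $\delta(\Sigma_\star)^2$ times a coefficient, whether or not $F$ has normal derivatives; that hypothesis controls derivatives of $\delta(\Sigma_\star)$, not its power.

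For $F=K$ this is harmless only because the density \eqref{eq:A5} is algebraic in $(h_{ij},\pi^{ij})$. The coefficient $NM\bigl(\frac{\partial\mathcal K}{\partial h_{ij}}\frac{\partial\mathcal K}{\partial\pi^{ij}}-\frac{\partial\mathcal K}{\partial\pi^{ij}}\frac{\partial\mathcal K}{\partial h_{ij}}\bigr)$ then vanishes identically. That, not the absence of normal derivatives, is why the self bracket vanishes for $F=K$.

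For admissible $F$ with derivative dependence this fails. Take the $\alpha\mathcal R[h]$ term in \eqref{eq:representativeF}: the $h$-derivative of $\int N\delta(\Sigma_\star)\sqrt h\,\alpha\mathcal R$ contains $D^iD^j\bigl(N\delta(\Sigma_\star)\bigr)$. Its cross term with the $\pi$-dependent part leaves an antisymmetrized integrand proportional to $\delta(\Sigma_\star)\bigl(MD^2(N\delta(\Sigma_\star))-ND^2(M\delta(\Sigma_\star))\bigr)$, which does not vanish identically. The self bracket is then a formal product of singular distributions, so your fallback claim that the correction is a well-defined distribution is not delivered. What your argument, like the paper's, actually gives is formal support on $\Sigma_\star$. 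Either state (ii) at that level, restrict to ultralocal surface densities, or supply a regularization.

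\textbf{Three smaller corrections.}

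First, in 2(b) the claim that only tangential derivatives of $N,M$ appear holds when $\Sigma_\star$ is a leaf of the slicing. It fails when $\Sigma_\star$ meets $\Sigma_t$ in a 2-surface $S_t$, as in \eqref{eq:A14}. Contracting $\delta H_{\Sigma_\star}[N]/\delta\pi^{ij}\propto N\delta(\Sigma_\star)h_{ij}$ with the $\sqrt h(D^iD^jM-h^{ij}D^2M)$ part of $\delta H_{\rm ADM}[M]/\delta h_{ij}$ gives a multiple of $\int N\delta(\Sigma_\star)\sqrt h\,D^2M$. Its antisymmetrization leaves $\int_{S_t}(N\partial_n^2M-M\partial_n^2N)$ plus lower-order terms, with $\partial_n$ normal to $S_t$ in $\Sigma_t$. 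This is neither an intrinsic divergence on $S_t$ nor proportional to constraints, so your planned explicit $F=K$ check of Dirac-sense weak closure would fail in that case.

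Second, Step~2 treats only the $HH$ bracket and tacitly takes \eqref{eq:HD} for general smearings. But $D[\vec M]$ drags $(h_{ij},\pi^{ij})$ and not the fixed $S_t$, which produces an extra term containing $M^i\partial_i\delta(\Sigma_\star)$. This term is supported on $\Sigma_\star$, so (ii) survives, but it must be handled by the same localization argument.

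Third, invoking Theorem~\ref{thm:IFT} in Step~3 imports an invertibility hypothesis that the statement does not make. It is not needed: (iii) concerns bulk modes and follows, as in the paper, from the absence of a kinetic term for $X^\mu$.
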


\begin{proof}
Away from $\Sigma_\star$, the localized term vanishes identically, so the action reduces locally to the ADM action of General Relativity. Therefore the standard first class Poisson algebra holds in every open set disjoint from $\Sigma_\star$. Since functional derivatives of $H_{\Sigma_\star}$ are distributions supported on $\Sigma_\star$, all additional Poisson bracket terms are likewise supported there. This gives the weak closure relation \eqref{eq:weakclosed}. The bulk phase space consists of the twelve canonical variables $(h_{ij},\pi^{ij})$. There are four first class constraints, each removing two phase space dimensions. Hence $N_{\rm dof}=2$. Because the embedding variables possess no independent kinetic term, they generate no additional bulk propagating modes.
\end{proof}

Accordingly, the canonical structure coincides with that of General Relativity in the bulk, while the hypersurface term contributes only localized weak corrections to the Hamiltonian sector.

\section{Relation to Existing Constructions, Selection of Solutions, and Cosmological Implications}
\label{sec:Relation_to_Existing_Geometric_Constructions}

The hypersurface anchored variational framework is clarified by comparison with standard constructions in General Relativity involving distinguished hypersurfaces. It is also necessary to characterize the resulting restriction on the space of admissible solutions. A natural question is whether the anchoring condition is equivalent to imposing a slicing gauge such as maximal slicing or constant mean curvature slicing. This is not the case. In the ADM formulation, a foliation $\{\Sigma_t\}$ is introduced as part of a $3+1$ decomposition, and conditions such as $K=0,
~ K=\mathrm{constant}$; serve as gauge choices selecting representatives within a diffeomorphism class \cite{ChoquetBruhatYork1980,SmarrYork1978}. They do not arise from independent variation of embedding variables and do not define a new variational sector.

\begin{theorem}[Inequivalence to slicing gauge conditions]
\label{thm:gaugeineq}
Let $E[g_{\mu\nu},X^\mu]=0$ denote the anchoring equation obtained by independent variation of the embedding variables in the action
\[
S[g,X]=S_{\rm EH}[g]+\lambda \mathcal I[\Sigma_\star,g,X].
\]
Assume that $\mathcal I$ depends nontrivially on intrinsic or extrinsic geometric invariants of the embedded hypersurface. Then, in general, there does not exist a local gauge fixing condition $\chi[g_{\mu\nu}]=0$ depending only on the metric and finitely many of its derivatives such that the solution sets of $E[g,X]=0$ and $\chi[g]=0$ coincide modulo diffeomorphism for all admissible spacetimes.
\end{theorem}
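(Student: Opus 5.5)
The statement is a non-existence claim ("in general"), so I would prove it by exhibiting one explicit admissible functional and one family of admissible spacetimes for which no local gauge condition $\chi[g]=0$ can reproduce the anchoring sector modulo diffeomorphism.

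The structural point I want to exploit is a type mismatch.
\begin{itemize}
\item A slicing gauge $\chi[g]=0$ is a condition that every diffeomorphism class can be brought to satisfy.
\item The anchoring equation $E[g,X]=0$ is a diffeomorphism-covariant equation on the pair $(g,X)$. Its solvability is a geometric property of $g$.
\end{itemize}

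\emph{Step 1: choose the functional and compute $E$.} Take the representative functional \eqref{eq:representativeF} with $\beta=0$, i.e.\ $F=K+\alpha\mathcal R[h]$. Using \eqref{eq:deltaK} and the standard normal variations of $\sqrt h$ and $\mathcal R[h]$ (namely $\delta\sqrt h=\sqrt h K\Phi$ and $\delta h_{ab}=2K_{ab}\Phi$), compute $E$ explicitly. After integrating the $\nabla^a\nabla_a\Phi$ term by parts on the closed hypersurface, the $K$-part contributes $K^2-K_{ab}K^{ab}-R_{\mu\nu}u^\mu u^\nu$. The $\alpha$-part contributes $\alpha\left(\mathcal R K-2\mathcal R^{ab}K_{ab}\right)$. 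Hence
\[
E=K^2-K_{ab}K^{ab}-R_{\mu\nu}u^\mu u^\nu+\alpha\left(\mathcal R K-2\mathcal R^{ab}K_{ab}\right),
\]
up to sign conventions. By the Gauss equation this is a fixed scalar combination of ambient curvature and hypersurface geometry, with no free function in it.

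\emph{Step 2: show the anchoring condition restricts geometry, while any gauge does not.} Suppose $\chi$ existed, with $\{g:\exists X,\ E[g,X]=0\}/\mathrm{Diff}=\{g:\chi[g]=0\}/\mathrm{Diff}$. Being a gauge condition, $\chi=0$ must be attainable in every diffeomorphism class, i.e.\ every spacetime has a representative satisfying it. It therefore suffices to exhibit one admissible spacetime admitting no anchoring hypersurface.

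I would work in vacuum de Sitter, or flat FLRW with $\Lambda$, with $\alpha=0$. Restrict first to the homogeneous slices $t=$ const. There $K_{ab}=Hh_{ab}$, so $E$ reduces to an algebraic function of $H$ and $\Lambda$. Explicitly $E=6H^2-\Lambda$ up to convention. Tuning the matter content or $\Lambda$ so that this expression has a fixed sign, for instance adding a perfect fluid with $\rho+3p$ of appropriate sign, makes $E\neq0$ on all homogeneous slices.

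\emph{Step 3 (main obstacle): exclude inhomogeneous anchoring slices.} The difficulty is ruling out arbitrary compact spacelike embeddings, not only the symmetric ones. I would first try an integral argument. Integrate $E$ over a closed $\Sigma_\star$ and use the Gauss equation to write $K^2-K_{ab}K^{ab}=\mathcal R+2G_{\mu\nu}u^\mu u^\nu-\ldots$. In vacuum with $\Lambda$, $R_{\mu\nu}u^\mu u^\nu=-\Lambda$, and with $\alpha=0$ the equation reduces to $\mathcal R=-2\Lambda$ up to constants, using the Hamiltonian constraint.

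If that identity is too weak, choose the spatial topology to force the contradiction. Take $\Sigma\simeq T^3$: no metric on $T^3$ has positive scalar curvature (Schoen–Yau / Gromov–Lawson), which obstructs constant positive $\mathcal R$ when the sign of $\Lambda$ requires it. The fallback is to perturb the metric generically, using Theorem~\ref{thm:IFT} in reverse: a perturbation that makes the linearized obstruction nonzero destroys a given anchored slice.

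With the counterexample in hand, conclude that the solution sets cannot coincide modulo diffeomorphism. This establishes the "in general" statement.
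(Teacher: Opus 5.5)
Your Step 2 reduction is exactly the premise of the paper's proof: a gauge condition only selects representatives inside each diffeomorphism orbit. The paper then simply asserts that anchoring is a geometric restriction. You correctly see that one actually needs an Einstein spacetime admitting no anchoring hypersurface. Your Step 1 formula for $E$ is right. Note that it is not the condition $K=0$ used for $F=K$ in Theorem~\ref{thm:propersubset}; $K=0$ is the Euler--Lagrange equation of the volume functional, and it would make anchoring literally the maximal-slicing condition.

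\textbf{The gap is Step 3, which carries the whole proof and is not closed.} The homogeneous computation excludes only symmetric slices. The integral identity adds nothing, because $E=0$ is already a pointwise equation. The fallback via Theorem~\ref{thm:IFT} runs backwards: the implicit function theorem gives persistence, and destroying one anchored slice says nothing about the others.

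The topological option is the right one, but your constant is wrong, and its sign decides whether the obstruction applies. By Gauss, $K^2-K_{ab}K^{ab}=2G_{\mu\nu}u^\mu u^\nu-\mathcal R$, so your $E$ equals $R_{\mu\nu}u^\mu u^\nu+R-\mathcal R$. In vacuum with $\Lambda$ the anchoring equation is therefore $\mathcal R=3\Lambda$, not $-2\Lambda$. With your sign and $\Lambda>0$ there would be no obstruction at all, since $T^3$ carries metrics of constant negative scalar curvature. Choosing $\Lambda<0$ to fit your sign would produce an invalid example.

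To close the argument, take $\Lambda>0$ and $M=\mathbb R\times T^3$ with $g=-dt^2+e^{2Ht}\delta_{ij}dx^idx^j$, $3H^2=\Lambda$. A closed connected spacelike hypersurface is transverse to $\partial_t$. Projection along the $t$-lines is then a local diffeomorphism from a compact manifold onto $T^3$, hence a finite covering, so the hypersurface is itself a $3$-torus. By Schoen--Yau/Gromov--Lawson it cannot carry $\mathcal R=3\Lambda>0$, so $g\notin\mathcal S_{\rm HA}$. Adding dominant-energy matter, as in your Step 2, does not spoil this, since then $R_{\mu\nu}u^\mu u^\nu+R\ge 3\Lambda$.

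You must also state explicitly that anchoring hypersurfaces are closed, as your integration by parts in Step 1 already presupposes. The equation $E=0$ is a local PDE. Saddle-shaped spacelike pieces with $K^2-K_{ab}K^{ab}=-\Lambda$, i.e.\ $\mathcal R=3\Lambda$, can be built locally in de Sitter (e.g.\ by Cauchy--Kovalevskaya in a non-characteristic direction). If non-compact or bounded $\Sigma_\star$ were admitted, as the paper's wording allows, this counterexample and with it your reduction would fail.
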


\begin{proof}
A gauge condition $\chi[g]=0$ depends only on the metric variables and selects representatives within a fixed diffeomorphism orbit. It introduces no independent Euler-Lagrange equation. By contrast, the anchoring condition arises from the stationarity requirement $\delta_XS=0$, where $X^\mu$ is varied independently of $g_{\mu\nu}$. Hence $E=E(g_{\mu\nu},X^\mu,\partial X,\partial^2X)$, depends on embedding data through the induced metric, normal vector, and extrinsic geometry. Tangentially equivalent embeddings represent the same hypersurface, but distinct normal deformations generally change $E$. Therefore the existence of anchored hypersurfaces is a geometric property of embeddings inside a given spacetime, not merely a coordinate choice on that spacetime. No universal local metric gauge condition can reproduce this embedding variational sector for all solutions.
\end{proof}

In the present construction, the hypersurface $\Sigma_\star$ is determined dynamically through the embedding variable $X^\mu$ and is not prescribed a priori. The admissible solution space is
\begin{equation}
\mathcal S_{\rm HA}=
\left\{
\left.
\begin{array}{c}
(M,g_{\mu\nu})\\[1.2ex]
\end{array}
\right|\;
\begin{array}{l}
G_{\mu\nu}=8\pi G T_{\mu\nu},\\[0.4ex]
\exists\,\Sigma_\star\subset M \text{ such that } E[g,X]=0
\end{array}
\right\}.
\label{eq:SHA}
\end{equation}

Thus admissibility requires both satisfaction of Einstein's equations and existence of at least one anchoring hypersurface solving the embedding equation. Although the metric field equations contain a term of the form $\delta(\Sigma_\star)\tau_{\mu\nu}$, this term is induced by variation of the hypersurface functional and is not an independently prescribed matter shell. The framework is therefore distinct from standard thin-shell constructions \cite{BerezinKuzminTkachev1987,FrolovIsraelUnruh1990}. Unlike Israel thin shells, the hypersurface is not externally inserted matter data but arises from independent variation of embedding variables. A nontriviality statement can be established explicitly for representative admissible models.

\begin{theorem}[Proper subsector for the representative model $F=K$]
\label{thm:propersubset}
Consider the representative admissible functional
\[
\mathcal I=\int_{\Sigma_\star}\sqrt h\,K.
\]
Then the corresponding anchored solution space $\mathcal S_{\rm HA}$ is a proper subset of the Einstein solution space $\mathcal S_{\rm GR}$.
\end{theorem}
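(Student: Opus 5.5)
The plan is to compute the anchoring operator $E$ explicitly for $F=K$. I will then exhibit a single vacuum solution of Einstein's equations that contains no closed embedded spacelike hypersurface satisfying $E=0$. That spacetime lies in $\mathcal S_{\rm GR}$ but not in $\mathcal S_{\rm HA}$, which proves the inclusion is proper (the inclusion $\mathcal S_{\rm HA}\subseteq\mathcal S_{\rm GR}$ holds by the definition \eqref{eq:SHA}).

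\emph{Step 1: computing $E$.} Take $\Sigma_\star$ closed, so that intrinsic total divergences integrate to zero, and vary it normally, $\delta X^\mu=\Phi u^\mu$. Two ingredients enter:
\begin{itemize}
\item the area variation $\delta\sqrt h=\sqrt h\,K\Phi$;
\item the variation \eqref{eq:deltaK} of $K$.
\end{itemize}
Combining them gives
\[
\delta_X\!\int_{\Sigma_\star}\!\sqrt h\,K=\int_{\Sigma_\star}\!\sqrt h\,\bigl(K^2-K_{ab}K^{ab}-R_{\mu\nu}u^\mu u^\nu\bigr)\Phi ,
\]
because the $\nabla^a\nabla_a\Phi$ term integrates away. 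The contracted Gauss equation, $\mathcal R[h]+K^2-K_{ab}K^{ab}=2G_{\mu\nu}u^\mu u^\nu$, then rewrites the anchoring equation as
\[
E=2G_{\mu\nu}u^\mu u^\nu-R_{\mu\nu}u^\mu u^\nu-\mathcal R[h]=0 .
\]
On any vacuum solution with vanishing cosmological constant this reduces to the purely intrinsic condition $\mathcal R[h]=0$. I will double-check the sign conventions here, but only the vacuum reduction $E\propto\mathcal R[h]$ is used below, and that reduction is insensitive to signs.

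\emph{Step 2: the counterexample spacetime.} Let $N=\mathbb H^3/\Gamma$ be a closed hyperbolic 3-manifold. Take the flat Milne-type spacetime
\[
M=(0,\infty)\times N,\qquad g=-dt^2+t^2 g_{\rm hyp}.
\]
It is a quotient of the interior of the future light cone of Minkowski space. It is therefore vacuum and lies in $\mathcal S_{\rm GR}$. It is also globally hyperbolic, with compact Cauchy surfaces $\{t\}\times N$.

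\emph{Step 3: no anchoring hypersurface exists.} Let $\Sigma_\star$ be any closed embedded spacelike hypersurface in $M$. By the standard fact that a compact embedded spacelike hypersurface in a globally hyperbolic spacetime with compact Cauchy surfaces is itself a Cauchy surface, $\Sigma_\star$ is diffeomorphic to $N$. By Step 1, an anchoring hypersurface would carry a metric with $\mathcal R[h]\equiv0$. This is excluded by the Schoen--Yau / Gromov--Lawson theorem: a closed 3-manifold admitting a hyperbolic metric (more generally, a closed aspherical 3-manifold) admits no metric of nonnegative scalar curvature, in particular no scalar-flat one. Hence $(M,g)\notin\mathcal S_{\rm HA}$.

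\emph{Main obstacle.} The delicate step is the topological one in Step 3. It requires:
\begin{itemize}
\item pinning down that the admissible $\Sigma_\star$ are closed (needed both to discard boundary terms in Step 1 and to make them Cauchy surfaces);
\item correctly invoking the Cauchy-surface characterization.
\end{itemize}
If noncompact anchoring hypersurfaces were allowed, I would fall back on a decay or boundary condition for $\Phi$. I would then replace the topological argument with a positive-mass or maximum-principle obstruction, but the compact case is what I would present first.
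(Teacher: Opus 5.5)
Your argument is correct, provided $\Sigma_\star$ is restricted to closed hypersurfaces. The paper's own proof makes the same restriction implicitly, since it speaks of compact maximal Cauchy hypersurfaces. Your route, however, is genuinely different from the paper's.

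\textbf{The paper's route.} It asserts that for $F=K$ the anchoring equation is the stationarity condition of the volume functional, i.e.\ $K=0$. It then points to expanding FLRW models, whose homogeneous slices have $K=3H>0$.

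\textbf{Your route.} You vary $\int\sqrt h\,K$ itself. Since $\partial F/\partial K=1$, the Laplacian term in $\delta(\sqrt h\,K)$ is a pure divergence. You obtain $E=K^2-K_{ab}K^{ab}-R_{\mu\nu}u^\mu u^\nu$, which in vacuum is $E=-\mathcal R[h]$. This is the Lorentzian analogue of the classical fact that critical surfaces of total mean curvature in $\mathbb R^3$ are intrinsically flat.

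\textbf{What this buys.} Your $E$ is the equation the action actually produces; $K=0$ is the Euler--Lagrange equation of $F=1$, not of $F=K$. With your $E$, homogeneous slices of flat FLRW give $E=6H^2+3\ddot a/a=12\pi G(\rho-p)$. So the paper's example does exclude homogeneous anchoring slices for dust or radiation, but not because $H\neq0$. It fails outright for a stiff fluid $p=\rho$, where every homogeneous slice is anchoring.

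Your route also repairs a second weakness. The paper only checks the homogeneous slicing. You rule out every closed embedded spacelike hypersurface by combining two facts: compact spacelike hypersurfaces in a globally hyperbolic spacetime with compact Cauchy surfaces are themselves Cauchy, and the Gromov--Lawson/Schoen--Yau obstruction forbids a scalar-flat metric on $N$. The paper's version is shorter and ties into its cosmological section. Yours needs more global machinery, but it is complete and rests only on topology.

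\textbf{Closedness.} You are right that closedness is the crux, and it is essential rather than technical. For hypersurfaces with boundary, $E=0$ becomes a local second-order equation that one expects to be solvable on small patches of essentially any spacetime. No decay or maximum-principle fallback would then save the statement.

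\textbf{Two small fixes.}
\begin{enumerate}
\item Your parenthetical claim that closed aspherical 3-manifolds carry no metric with nonnegative scalar curvature is false: $T^3$ is aspherical and flat. The correct statement is that such manifolds carry no metric of positive scalar curvature, and any metric with $\mathcal R\ge0$ on them is flat. For a closed hyperbolic $N$, $\pi_1(N)$ is not virtually abelian, so $N$ is not flat, and the case you actually use stands.
\item The Cauchy-surface lemma needs connectedness, so apply it to a connected component of $\Sigma_\star$.
\end{enumerate}
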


\begin{proof}
For $F=K$, the embedding equation is the stationarity condition of the hypersurface volume functional and reduces to $K=0$. Hence admissible spacetimes must admit at least one embedded spacelike maximal hypersurface \cite{Bartnik1988}. There exist Einstein spacetimes that do not admit compact maximal Cauchy hypersurfaces under standard global assumptions; in particular, expanding FLRW cosmologies with strictly positive mean curvature on every homogeneous slice fail the condition on the homogeneous foliation class. Therefore not every Einstein solution belongs to $\mathcal S_{\rm HA}$.

Conversely, every element of $\mathcal S_{\rm HA}$ satisfies Einstein's equations by construction. Hence $\mathcal S_{\rm HA}\subsetneq\mathcal S_{\rm GR}$.
\end{proof}

The theorem shows that the anchored theory is not a reformulation of General Relativity but a constrained variational subsector. As an illustrative cosmological example, consider the spatially flat Friedmann-Lema\^itre-Robertson-Walker metric
\begin{equation}
ds^2=-dt^2+a(t)^2d\vec x^{\,2},
\qquad
H=\frac{\dot a}{a}.
\label{eq:FLRWmetric}
\end{equation}

For the representative admissible model $F=K$, the homogeneous constant time slices satisfy $K=3H$. Hence the anchoring condition selects times $t_\star$ such that
\begin{equation}
H(t_\star)=0.
\label{eq:Htstar}
\end{equation}

Therefore expanding solutions with $H(t)>0$ for all cosmic times admit no homogeneous anchoring slice of this type, whereas bouncing or recollapsing cosmologies may admit one \cite{EllisMaartensMacCallum2012}.

The distributional field equations imply a junction condition across a hypersurface located at $t=t_\star$. Let $H_-$ and $H_+$ denote the one sided limits of the Hubble parameter. Integrating the modified Friedmann equation across an infinitesimal interval containing $t_\star$ gives
\begin{equation}
\Delta H
\equiv
H_+-H_-
=
-\frac{\lambda}{2}\,\sigma_\star,
\label{eq:jumpgeneral}
\end{equation}
where $\sigma_\star$ is the coefficient multiplying $\delta(t-t_\star)$ in the effective energy density induced by the hypersurface term. For the normalized representative model in which $\sigma_\star=2H(t_\star)$, equation \eqref{eq:jumpgeneral} reduces to
\begin{equation}
\Delta H=-\lambda H(t_\star).
\label{eq:jumpnormalized}
\end{equation}

Thus the previously quoted jump condition is a model dependent normalization of the general distributional matching law.

A piecewise background history satisfying the junction relation may be written as
\begin{equation}
H(t)=
\begin{cases}
H_{\rm ref}(t), & t<t_\star,\\[1mm]
H_{\rm ref}(t)+\Delta H, & t>t_\star,
\end{cases}
\label{eq:piecewiseH}
\end{equation}
where $H_{\rm ref}(t)$ is any smooth reference solution. The luminosity distance then becomes
\begin{equation}
d_L(z)
=
(1+z)\int_0^z \frac{dz'}{H(z')},
\label{eq:dL}
\end{equation}
so that a localized shift in $H(z)$ produces a corresponding shift in background observables \cite{Weinberg1972,Hogg1999}. The condition $E=0$ therefore restricts the admissible cosmological trajectories to those admitting an embedded anchoring hypersurface. In this sense, the framework acts as a selection principle on the space of solutions rather than as a modification of the local bulk gravitational equations. Accordingly, the hypersurface anchored variational principle defines a constrained sector of General Relativity in which admissible spacetimes must admit at least one dynamically selected hypersurface, while localized matching effects may occur across that surface without altering the bulk Einstein dynamics away from it.

\section{Linear Perturbations and Stability of the Anchoring Hypersurface}
\label{sec:Perturbation}

Dynamical consistency is further examined through linear perturbations about a background anchored solution. The objective is to determine whether the hypersurface sector modifies the propagating gravitational modes and to analyze the response of the anchoring condition under small deformations.

Let $(g_{\mu\nu}^{(0)},X^\mu_{(0)})$ denote a background configuration satisfying
\begin{equation}
G_{\mu\nu}^{(0)}
+\lambda T_{\mu\nu}^{(\Sigma_\star)}
=
8\pi G\,T_{\mu\nu}^{(0)},
\qquad
E[g^{(0)},X_{(0)}]=0,
\label{eq:bgpert}
\end{equation}
with all equations understood in the weak sense.

Perturb the fields according to
\begin{equation}
g_{\mu\nu}=g_{\mu\nu}^{(0)}+\delta g_{\mu\nu},
\qquad
X^\mu=X^\mu_{(0)}+\delta X^\mu.
\label{eq:pertfields}
\end{equation}

The embedding perturbation decomposes into normal and tangential parts,
\begin{equation}
\delta X^\mu=\Phi u^\mu+\xi^a e^\mu_a,
\label{eq:embedpert}
\end{equation}
where $u^\mu$ is the unit normal to $\Sigma_\star$ and $e^\mu_a$ are tangent basis vectors. Tangential deformations $\xi^a e^\mu_a$ correspond to reparametrizations of the hypersurface and are gauge. The physical embedding fluctuation is therefore the normal displacement $\Phi$ \cite{ChoquetBruhatDeWittMorette1989}.

Away from $\Sigma_\star$, the localized term vanishes identically. The linearized metric equation reduces to
\begin{equation}
\delta G_{\mu\nu}=8\pi G\,\delta T_{\mu\nu},
\label{eq:linEinstein}
\end{equation}
which is precisely the standard linearized Einstein system. Accordingly, the bulk principal symbol is unchanged, and in a suitable gauge the propagating modes are the usual two transverse-traceless tensor polarizations \cite{MisnerThorneWheeler1973,Maggiore2008}. No additional bulk wave sector is generated by the hypersurface term.

On $\Sigma_\star$, variation of the anchoring equation gives
\begin{equation}
\delta E
=
\mathcal L_\Sigma \Phi
+
\mathcal S[\delta g_{\mu\nu}]
=
0,
\label{eq:deltaE}
\end{equation}
where $\mathcal L_\Sigma$ is a second order differential operator intrinsic to the hypersurface and $\mathcal S$ is linear in the metric perturbation and its tangential derivatives.

For admissible functionals, the principal part of $\mathcal L_\Sigma$ is elliptic,
\begin{equation}
\mathcal L_\Sigma \Phi
=
-\nabla^a\nabla_a\Phi+\cdots,
\label{eq:ellipticprincipal}
\end{equation}
where $\nabla_a$ is the Levi-Civita derivative associated with the induced metric $h_{ab}$. For the representative model
\[
\mathcal I=\int_{\Sigma_\star}\sqrt h\,K,
\]
the operator reduces at principal level to the Laplace-Beltrami operator on $\Sigma_\star$.

Hence, subject to appropriate boundary data when $\Sigma_\star$ has boundary, or standard solvability conditions when compact, the normal displacement $\Phi$ is determined by an elliptic boundary value problem rather than a hyperbolic evolution equation \cite{LadyzhenskayaUraltseva1968,Aubin1982}.

Therefore $\Phi$ does not represent an independent propagating degree of freedom. Instead, it is fixed quasi locally by the geometry of the anchored surface and by the ambient metric perturbation.

The hypersurface condition thus constrains the admissible perturbations without altering the local propagation law of gravitational waves in the bulk.

At the canonical level, the phase space remains coordinatized by $(h_{ij},\pi^{ij})$. The momentum constraints retain their standard form, while the Hamiltonian constraint differs only by a term supported on $\Sigma_\star$. In every open region disjoint from the anchoring surface, the constraint algebra coincides with that of General Relativity.

Consequently, the number of propagating bulk gravitational modes remains same as Eq.~(\ref{eq:bulkdof}). The embedding variable enters only through the elliptic relation $E[g,X]=0$ and therefore does not enlarge the local dynamical phase space.

A stability statement follows immediately from elliptic theory.

\begin{theorem}[Linear anchoring stability]
\label{thm:linearstability}
Assume that the background anchored hypersurface satisfies $E[g^{(0)},X_{(0)}]=0$ and that the linearized operator $\mathcal L_\Sigma$ has trivial kernel modulo tangential reparametrizations. Then for every sufficiently small metric perturbation $\delta g_{\mu\nu}$ there exists a unique small normal displacement $\Phi$, modulo gauge, solving \eqref{eq:deltaE}. Moreover,
\[
\|\Phi\|_{H^{2}(\Sigma_\star)}
\leq
C\,\|\mathcal S[\delta g]\|_{L^2(\Sigma_\star)},
\]
for some constant $C>0$ depending on the background geometry.
\end{theorem}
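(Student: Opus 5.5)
The plan is to treat \eqref{eq:deltaE} as a linear elliptic equation $\mathcal L_\Sigma\Phi=f$ with $f:=-\mathcal S[\delta g]$, and to obtain the bound from elliptic regularity combined with the trivial-kernel hypothesis. We take $\Sigma_\star$ compact without boundary, or compact with homogeneous Dirichlet data on $\partial\Sigma_\star$ as indicated before the theorem. We also assume the background $h_{ab}$, $K_{ab}$ and the ambient curvature on $\Sigma_\star$ are smooth, so that $\mathcal L_\Sigma=-\nabla^a\nabla_a+b^a\nabla_a+c$ is a second order operator with smooth coefficients. By \eqref{eq:ellipticprincipal} its principal symbol is $|\zeta|_h^2$, and it is therefore uniformly elliptic. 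For the representative model, \eqref{eq:deltaK} gives $b^a=0$ and $c=-(K_{ab}K^{ab}+R_{\mu\nu}u^\mu u^\nu)$.

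\textbf{Step 1: Fredholm property.} We regard $\mathcal L_\Sigma:H^2(\Sigma_\star)\to L^2(\Sigma_\star)$ as a bounded operator. The standard G\aa rding inequality and Rellich compactness show it is Fredholm. When $b^a=0$ it is formally self-adjoint, and in general it is a compact perturbation of the Laplacian, so its index is zero.

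\textbf{Step 2: invertibility.} Tangential reparametrizations do not change $\Phi$, since $\Phi$ is already the normal component. The hypothesis therefore says $\ker\mathcal L_\Sigma=\{0\}$ on normal displacements, and index zero then gives surjectivity. Thus $\mathcal L_\Sigma$ is an isomorphism $H^2\to L^2$, and the bounded inverse theorem gives $\|\mathcal L_\Sigma^{-1}\|\le C$. Setting $\Phi=-\mathcal L_\Sigma^{-1}\mathcal S[\delta g]$ gives existence and uniqueness, and the estimate $\|\Phi\|_{H^2}\le C\|\mathcal S[\delta g]\|_{L^2}$ follows at once. Here $C$ depends only on the background coefficients, that is, on the background geometry. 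Alternatively, one can derive the estimate by contradiction: start from $\|\Phi\|_{H^2}\le C(\|\mathcal L_\Sigma\Phi\|_{L^2}+\|\Phi\|_{L^2})$, take a normalized sequence with $\mathcal L_\Sigma\Phi_n\to0$, and use compactness to extract a nonzero kernel element.

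\textbf{Step 3: smallness.} We need $\mathcal S[\delta g]\in L^2(\Sigma_\star)$, with $\|\mathcal S[\delta g]\|_{L^2}\lesssim\|\delta g\|$ in a norm controlling one tangential derivative on $\Sigma_\star$, for instance $C^1$ near $\Sigma_\star$, or the $C^{k,\alpha}$ topology of Theorem~\ref{thm:IFT} by trace. Once this holds, small $\delta g$ yields small $\Phi$. At the linear level "sufficiently small" is not really needed. It matters only for compatibility with Theorem~\ref{thm:IFT}, whose linearization $D_XE$ coincides with $\mathcal L_\Sigma$ in the normal direction.

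\textbf{Main obstacle.} The delicate point is Step 2 in the non-self-adjoint case, or when $\Sigma_\star$ has boundary: there we must make sure the index really is zero, or that the boundary conditions make the problem well-posed. I would first try a homotopy $\mathcal L_t=-\nabla^a\nabla_a+t(b^a\nabla_a+c)$ connecting to the Laplacian, using stability of the index. Then, on closed $\Sigma_\star$, I would check that "trivial kernel" is understood so as to exclude constants, since the pure Laplacian has constants in its kernel. The zeroth order term, or Dirichlet data, must remove them, and the hypothesis ensures exactly this.
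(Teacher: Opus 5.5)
Your proposal is correct and follows the same route as the paper, whose proof consists only of invoking elliptic estimates and the Fredholm alternative together with the trivial-kernel hypothesis after quotienting tangential reparametrizations. You make explicit what the paper leaves implicit: compactness of $\Sigma_\star$ (or Dirichlet data) so that Rellich compactness applies, the index-zero argument that turns injectivity into surjectivity, and the bounded-inverse step that yields the $H^2$--$L^2$ estimate.
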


\begin{proof}
After quotienting tangential reparametrizations, $\mathcal L_\Sigma$ is elliptic with trivial kernel by assumption. Standard elliptic estimates and the Fredholm alternative imply existence, uniqueness, and the stated a priori bound \cite{Evans2010,McLean2000}.
\end{proof}

The linearized analysis therefore shows that, within the admissible class considered here, the hypersurface anchored construction preserves the local dynamical content of General Relativity while imposing an additional geometric constraint. The metric perturbations propagate as in General Relativity away from $\Sigma_\star$, whereas the hypersurface responds through a non-propagating elliptic adjustment determined by the perturbed geometry.

\section{Semiclassical and Global Considerations}
\label{sec:Quantum_and_Thermodynamic_Implications}

The hypersurface anchored variational framework admits a formal extension to semiclassical gravity at the level of stationary phase path integrals \cite{HartleHawking1983,Halliwell1991}. The purpose of the present section is limited to structural observations. No complete quantum theory is proposed.

Formally, one may consider the oscillatory functional integral
\begin{align}
\mathcal Z & = \int \mathcal D g_{\mu\nu}\,\mathcal D X^\mu\, \exp\!\{i(
S_{\rm EH}[g] +\lambda \mathcal I[\Sigma_\star,g,X]  +S_{\rm m}[g,\Psi])\},
\label{eq:formalZ}
\end{align}
where the embedding variables $X^\mu$ are included explicitly, and $\Sigma_\star$ is represented through the embedding map rather than fixed in advance.

This expression is understood only formally. In particular, no rigorous measure on the space of Lorentzian metrics and embeddings is assumed \cite{GibbonsHawking1977}.

In the semiclassical approximation, dominant configurations arise from stationary points of the phase. Hence one obtains the coupled Euler-Lagrange conditions
\begin{equation}
\delta_g S=0,
\qquad
\delta_X S=0,
\label{eq:stationaryphase}
\end{equation}
which are precisely the classical field equations 
\begin{equation}
G_{\mu\nu}
+\lambda T_{\mu\nu}^{(\Sigma_\star)}
=
8\pi G\,T_{\mu\nu},
\qquad
E[g,X]=0.
\label{eq:semiclassicalEL}
\end{equation}

Therefore, at stationary phase level, the hypersurface functional restricts the saddle point geometries to those admitting at least one anchoring hypersurface satisfying the embedding equation. In this sense, the construction modifies the admissible semiclassical saddle set rather than the local bulk propagator away from $\Sigma_\star$. The framework is structurally compatible with canonical viewpoints in which classical constraints determine the physical configuration space \cite{Kuchar1981}. Since the localized hypersurface term does not alter the bulk principal dynamics away from $\Sigma_\star$, the standard gravitational constraint structure remains locally unchanged there, while the anchoring equation adds a supplementary geometric selection condition. This should be interpreted only as a statement about the classical constrained sector underlying possible quantization schemes. No explicit Wheeler-DeWitt quantization, BRST treatment, or reduced phase space construction is developed here \cite{DeWitt1967}. The present framework also differs conceptually from proposals based on externally imposed boundary conditions. The additional condition arises from an interior variational hypersurface determined dynamically through the action, rather than from prescribed initial or final boundary data. Several essential problems remain open:

\begin{enumerate}
\item[(i)] construction of a mathematically meaningful functional measure for metrics and embeddings;

\item[(ii)] gauge fixing and Faddeev-Popov treatment in the presence of embedding variables;

\item[(iii)] treatment of hypersurface supported distributional terms beyond the formal semiclassical level;

\item[(iv)] determination of whether the embedding variables should be quantized or treated as auxiliary constrained variables;

\item[(v)] characterization of the physical state space in any canonical quantization.
\end{enumerate}

For these reasons, no claim is made regarding the existence of a consistent nonperturbative quantum completion. At the global classical level, however, the framework already has a clear interpretation: it defines a restricted sector of spacetime histories selected by the existence of embedded hypersurfaces solving $E[g,X]=0$. This global selection mechanism exists independently of any full quantum completion. Accordingly, the present construction should be viewed as a classical and semiclassical constrained variational framework whose fully quantum formulation remains open.

\section{Conclusion}
\label{sec:conclusion}

A hypersurface anchored variational formulation of gravity has been constructed in which the Einstein-Hilbert action is supplemented by a diffeomorphism invariant functional supported on an embedded spacelike hypersurface whose embedding is varied independently of the spacetime metric. The resulting Euler-Lagrange system consists of the Einstein equations with a localized distributional contribution together with an anchoring equation determining admissible embeddings. Accordingly, the framework imposes a global geometric condition on admissible spacetimes while preserving the standard Einstein equations in every open region disjoint from the anchoring surface. For the admissible class of hypersurface functionals considered here, the variational problem is well defined in the weak sense. The bulk principal symbol and second order character of the Einstein equations remain unchanged away from $\Sigma_\star$, while the embedding equation is elliptic on the hypersurface. Under standard invertibility assumptions, this yields local persistence and linear stability of anchoring hypersurfaces under sufficiently small perturbations of background anchored solutions. At the canonical level, the momentum constraints retain their standard form, whereas the Hamiltonian constraint acquires a contribution supported on $\Sigma_\star$. The corresponding smeared generators close in the weak sense: the Dirac algebra of General Relativity is recovered in the bulk, and all deviations are confined to localized distributional terms intrinsic to the anchoring surface. Consequently, the number of propagating bulk gravitational degrees of freedom remains equal to that of General Relativity, $N_{\rm dof}=2$, and the embedding variable acts as a quasi local constraint variable rather than as an additional propagating field. The theory defines the admissible solution sector, given by Eq.~(\ref{eq:SHA}), which is generally a proper subset of the Einstein solution space for representative admissible models. The anchoring condition is not equivalent to a foliation choice or gauge fixing. It arises from independent variation of embedding variables and therefore constitutes an additional variational selection principle rather than a coordinate prescription. In homogeneous cosmology, the localized hypersurface term induces a distributional modification of the Friedmann equations and yields matching conditions across the anchoring surface. For representative normalizations, this leads to jump relations of the form $\Delta H=-\lambda H(t_\star)$, illustrating how localized transitions in the expansion history may occur while standard evolution is preserved away from the hypersurface. At the semiclassical level, the same structure formally restricts the stationary saddle point geometries contributing to a path integral treatment. A complete quantum formulation, including the functional measure, gauge fixing, and treatment of embedding variables, has not been developed and remains open. Taken together, the existence, gauge inequivalence, weak closure, and stability results indicate that the construction defines a mathematically consistent constrained variational subsector of General Relativity in which global geometric conditions are incorporated directly at the level of the action without modifying the local bulk dynamical content. Several natural extensions remain to be studied: classification of admissible anchoring hypersurfaces in generic spacetimes, nonlinear solution branches, black hole and cosmological applications, observational constraints on localized matching effects, and possible fully quantum implementations of the constrained sector.


\appendix
\section{Weak Constraint Algebra for the Representative Functional $F=K$}
\label{app:constraint_algebra}

This appendix gives an explicit weak derivation of the mixed Poisson bracket between the hypersurface supported Hamiltonian contribution and the standard ADM Hamiltonian constraint for the representative admissible functional $F=K$. The objective is to show that all deviations from the Dirac algebra are localized on $\Sigma_\star$, while the bulk canonical algebra remains unchanged.

Let the total Hamiltonian constraint be decomposed as
\begin{equation}
\mathcal H=\mathcal H_{\rm ADM}+\lambda \mathcal H_{\Sigma_\star},
\label{eq:A1}
\end{equation}
with smeared generators given by Eq.~(\ref{eq:Hsplit}), where
\begin{equation}
H_{\rm ADM}[N]
=
\int_{\Sigma_t} d^3x\,N\,\mathcal H_{\rm ADM},
\label{eq:A3}
\end{equation}
and
\begin{equation}
H_{\Sigma_\star}[N]
=
\int_{\Sigma_t} d^3x\,N(x)\,\delta(\Sigma_\star)\,\mathcal K(x).
\label{eq:A4}
\end{equation}

For the representative model $F=K$, the surface density is
\begin{equation}
\mathcal K=\sqrt h\,K
=
-\frac{1}{2\sqrt h}\,\pi^{ij}h_{ij},
\label{eq:A5}
\end{equation}
using the ADM relation given by Eq.~(\ref{eq:canonicalmomentum}). The canonical Poisson bracket is
\begin{equation}
\{A,B\}
=
\int_{\Sigma_t} d^3x
\left(
\frac{\delta A}{\delta h_{ij}}
\frac{\delta B}{\delta \pi^{ij}}
-
\frac{\delta A}{\delta \pi^{ij}}
\frac{\delta B}{\delta h_{ij}}
\right).
\label{eq:A7}
\end{equation}

The mixed contribution entering the total algebra is
\begin{equation}
\{H_{\Sigma_\star}[N],H_{\rm ADM}[M]\}.
\label{eq:A8}
\end{equation}

Since $H_{\Sigma_\star}$ contains $\delta(\Sigma_\star)$ explicitly, all of its functional derivatives are distributions supported on $\Sigma_\star$. Therefore every term in the mixed bracket is supported on $\Sigma_\star$.

For the momentum derivative one obtains directly from \eqref{eq:A4}-\eqref{eq:A5}
\begin{equation}
\frac{\delta H_{\Sigma_\star}[N]}{\delta \pi^{ij}(x)}
=
-\frac12\,N(x)\,\delta(\Sigma_\star)\,
\frac{h_{ij}}{\sqrt h}.
\label{eq:A9}
\end{equation}

Variation with respect to $h_{ij}$ gives
\begin{equation}
\frac{\delta H_{\Sigma_\star}[N]}{\delta h_{ij}(x)}
=
N\,\delta(\Sigma_\star)\,\mathcal A^{ij}
+
D_k\!\left(
N\,\delta(\Sigma_\star)\,\mathcal B^{kij}
\right),
\label{eq:A10}
\end{equation}
where $\mathcal A^{ij}$ and $\mathcal B^{kij}$ are local functions of $(h_{ij},\pi^{ij})$, and $D_k$ is the spatial covariant derivative compatible with $h_{ij}$.

For the ADM Hamiltonian constraint, standard canonical identities imply
\begin{equation}
\frac{\delta H_{\rm ADM}[M]}{\delta \pi^{ij}}
=
2M K_{ij},
\label{eq:A11}
\end{equation}
and
\begin{equation}
\frac{\delta H_{\rm ADM}[M]}{\delta h_{ij}}
=
M\,\mathcal C^{ij}
+
D_k(M\mathcal D^{kij}),
\label{eq:A12}
\end{equation}
where $\mathcal C^{ij}$ and $\mathcal D^{kij}$ are the usual local ADM coefficient functions.

Substituting \eqref{eq:A9}-\eqref{eq:A12} into \eqref{eq:A7} gives
\begin{equation}
\{H_{\Sigma_\star}[N],H_{\rm ADM}[M]\}
=
\int_{\Sigma_t} d^3x\,
\delta(\Sigma_\star)\,
\mathcal F[N,M;h,\pi],
\label{eq:A13}
\end{equation}
where $\mathcal F$ is local in the canonical data and linear in first derivatives of $N$ and $M$.

Hence the mixed bracket has no bulk support.

Because all derivatives acting on the Dirac factor are tangential after restriction to $\Sigma_\star$, intrinsic integration by parts on the spatial intersection $S_t=\Sigma_t\cap\Sigma_\star$ yields
\begin{align}
\{H_{\Sigma_\star}[N],H_{\rm ADM}[M]\} = \int_{S_t} d^2\xi\,\mathcal R[N,M]  + \int_{S_t} d^2\xi\,\left(N D_aM-MD_aN \right)\mathcal J^a,
\label{eq:A14}
\end{align}
where $D_a$ is the covariant derivative intrinsic to $S_t$, $\mathcal J^a$ is a surface current built from induced canonical data, and $\mathcal R$ contains lower derivative local terms.

Equation \eqref{eq:A14} is manifestly supported only on the anchoring surface.

Combining with the standard ADM relation
\begin{equation}
\{H_{\rm ADM}[N],H_{\rm ADM}[M]\}
=
D\!\left[h^{ij}(N\partial_jM-M\partial_jN)\right],
\label{eq:A15}
\end{equation}
the full Hamiltonian bracket becomes
\begin{align}
\{H[N],H[M]\}& =
D\!\left[h^{ij}(N\partial_jM-M\partial_jN)\right]  + \lambda D_{\Sigma_\star}[N,M],
\label{eq:A16}
\end{align}
where
\begin{align}
D_{\Sigma_\star}[N,M] & =
\{H_{\Sigma_\star}[N],H_{\rm ADM}[M]\}  - \{H_{\Sigma_\star}[M],H_{\rm ADM}[N]\}.
\label{eq:A17}
\end{align}

The correction term is distributionally supported on $\Sigma_\star$ and vanishes identically in every open region disjoint from it. Therefore, for the representative admissible functional $F=K$, the hypersurface contribution modifies the Hamiltonian constraint algebra only through localized weak terms. Away from the anchoring surface, the canonical algebra reduces exactly to the Dirac algebra of General Relativity, and the bulk first class structure is preserved.

\bibliographystyle{unsrt}
\bibliography{sample}

@article{Einstein1915,
    author = "Einstein, Albert",
    title = "{The Field Equations of Gravitation}",
    journal = "Sitzungsber. Preuss. Akad. Wiss. Berlin (Math. Phys. )",
    volume = "1915",
    pages = "844--847",
    year = "1915"
}

@book{Wald1984,
    author = "Wald, Robert M.",
    title = "{General Relativity}",
    doi = "10.7208/chicago/9780226870373.001.0001",
    publisher = "Chicago Univ. Pr.",
    address = "Chicago, USA",
    year = "1984"
}

@article{ChoquetBruhatGeroch1969,
    author = "Choquet-Bruhat, Y. and Geroch, Robert P.",
    title = "{Global aspects of the Cauchy problem in general relativity}",
    doi = "10.1007/BF01645389",
    journal = "Commun. Math. Phys.",
    volume = "14",
    pages = "329--335",
    year = "1969"
}

@article{York1972,
    author = "York, James",
    title = "{Boundary terms in the action principles of general relativity}",
    doi = "10.1007/BF01889475",
    journal = "Found. Phys.",
    volume = "16",
    pages = "249--257",
    year = "1986"
}

@article{GibbonsHawking1977,
    author = "Gibbons, G. W. and Hawking, S. W.",
    title = "{Action Integrals and Partition Functions in Quantum Gravity}",
    reportNumber = "PRINT-76-0995 (CAMBRIDGE)",
    doi = "10.1103/PhysRevD.15.2752",
    journal = "Phys. Rev. D",
    volume = "15",
    pages = "2752--2756",
    year = "1977"
}

@article{Israel1966,
  title={Singular hypersurfaces and thin shells in general relativity},
  author={Werner Israel},
  journal={Il Nuovo Cimento B (1965-1970)},
  year={1966},
  volume={44},
  pages={1-14},
  url={https://api.semanticscholar.org/CorpusID:122431858}
}

@article{BarrabesIsrael1991,
    author = "Barrabes, C. and Israel, W.",
    title = "{Thin shells in general relativity and cosmology: The Lightlike limit}",
    doi = "10.1103/PhysRevD.43.1129",
    journal = "Phys. Rev. D",
    volume = "43",
    pages = "1129--1142",
    year = "1991"
}

@article{RandallSundrum1999a,
  title = {Large Mass Hierarchy from a Small Extra Dimension},
  author = {Randall, Lisa and Sundrum, Raman},
  journal = {Phys. Rev. Lett.},
  volume = {83},
  issue = {17},
  pages = {3370--3373},
  numpages = {0},
  year = {1999},
  month = {Oct},
  publisher = {American Physical Society},
  doi = {10.1103/PhysRevLett.83.3370},
  url = {https://link.aps.org/doi/10.1103/PhysRevLett.83.3370}
}

@article{RandallSundrum1999b,
  title = {An Alternative to Compactification},
  author = {Randall, Lisa and Sundrum, Raman},
  journal = {Phys. Rev. Lett.},
  volume = {83},
  issue = {23},
  pages = {4690--4693},
  numpages = {0},
  year = {1999},
  month = {Dec},
  publisher = {American Physical Society},
  doi = {10.1103/PhysRevLett.83.4690},
  url = {https://link.aps.org/doi/10.1103/PhysRevLett.83.4690}
}

@article{Dvali2000,
    author = "Dvali, G. R. and Gabadadze, Gregory and Porrati, Massimo",
    title = "{4-D gravity on a brane in 5-D Minkowski space}",
    eprint = "hep-th/0005016",
    archivePrefix = "arXiv",
    reportNumber = "NYU-TH-00-04-01",
    doi = "10.1016/S0370-2693(00)00669-9",
    journal = "Phys. Lett. B",
    volume = "485",
    pages = "208--214",
    year = "2000"
}

@article{Penrose1965,
  title = {Gravitational Collapse and Space-Time Singularities},
  author = {Penrose, Roger},
  journal = {Phys. Rev. Lett.},
  volume = {14},
  issue = {3},
  pages = {57--59},
  numpages = {0},
  year = {1965},
  month = {Jan},
  publisher = {American Physical Society},
  doi = {10.1103/PhysRevLett.14.57},
  url = {https://link.aps.org/doi/10.1103/PhysRevLett.14.57}
}

@inbook{Penrose1979,
    author = "Penrose, R.",
    title = "{SINGULARITIES AND TIME ASYMMETRY}",
    booktitle = "{General Relativity}: {An Einstein Centenary Survey}",
    pages = "581--638",
    year = "1980"
}

@article{DeWitt1967,
  title = {Quantum Theory of Gravity. I. The Canonical Theory},
  author = {DeWitt, Bryce S.},
  journal = {Phys. Rev.},
  volume = {160},
  issue = {5},
  pages = {1113--1148},
  numpages = {0},
  year = {1967},
  month = {Aug},
  publisher = {American Physical Society},
  doi = {10.1103/PhysRev.160.1113},
  url = {https://link.aps.org/doi/10.1103/PhysRev.160.1113}
}

@book{Dirac1964,
	author = {P. A. M. Dirac},
	editor = {},
	publisher = {Yeshiva University},
	title = {Lectures on Quantum Mechanics},
	year = {1964}
}

@book{HenneauxTeitelboim1992,
    author = "Henneaux, Marc and Teitelboim, Claudio",
    title = "{Quantization of Gauge Systems}",
    isbn = "978-0-691-03769-1, 978-0-691-21386-6",
    publisher = "Princeton University Press",
    month = "8",
    year = "1994"
}

@article{GilbargTrudinger2001,
  author  = {David Gilbarg and Neil S. Trudinger},
  title   = {Elliptic Partial Differential Equations of Second Order},
  year    = {2001},
  journal = {Classics in mathematics},
  doi     = {10.1007/978-3-642-61798-0}
}

@book{Evans2010,
  author = {Evans, Lawrence C.},
  title = {Partial Differential Equations},
  publisher = {AMS},
  year = {2010},
  doi = {10.1090/gsm/019}
}

@article{MarsSenovilla1993,
    author = "Mars, Marc and Senovilla, Jose M. M.",
    title = "{Geometry of general hypersurfaces in space-time: Junction conditions}",
    eprint = "gr-qc/0201054",
    archivePrefix = "arXiv",
    doi = "10.1088/0264-9381/10/9/026",
    journal = "Class. Quant. Grav.",
    volume = "10",
    pages = "1865--1897",
    year = "1993"
}

@article{Poisson2004,
  author = {Poisson, Eric},
  title = {A Relativist's Toolkit},
  publisher = {Cambridge University Press},
  year = {2004},
  doi = {10.1017/CBO9780511606601}
}

@article{BrownYork1993,
  title = {Quasilocal energy and conserved charges derived from the gravitational action},
  author = {Brown, J. David and York, James W.},
  journal = {Phys. Rev. D},
  volume = {47},
  issue = {4},
  pages = {1407--1419},
  numpages = {0},
  year = {1993},
  month = {Feb},
  publisher = {American Physical Society},
  doi = {10.1103/PhysRevD.47.1407},
  url = {https://link.aps.org/doi/10.1103/PhysRevD.47.1407}
}

@article{Myers1987,
  title = {Higher-derivative gravity, surface terms, and string theory},
  author = {Myers, Robert C.},
  journal = {Phys. Rev. D},
  volume = {36},
  issue = {2},
  pages = {392--396},
  numpages = {0},
  year = {1987},
  month = {Jul},
  publisher = {American Physical Society},
  doi = {10.1103/PhysRevD.36.392},
  url = {https://link.aps.org/doi/10.1103/PhysRevD.36.392}
}

@article{GravanisWillison2003,
    author = "Gravanis, Elias and Willison, Steven",
    title = "{Israel conditions for the Gauss-Bonnet theory and the Friedmann equation on the brane universe}",
    eprint = "hep-th/0209076",
    archivePrefix = "arXiv",
    doi = "10.1016/S0370-2693(03)00555-0",
    journal = "Phys. Lett. B",
    volume = "562",
    pages = "118--126",
    year = "2003"
}

@article{Charmousis2002,
doi = {10.1088/0264-9381/19/18/304},
url = {https://doi.org/10.1088/0264-9381/19/18/304},
year = {2002},
month = {aug},
publisher = {},
volume = {19},
number = {18},
pages = {4671},
author = {Christos Charmousis and Jean-François Dufaux},
title = {General Gauss–Bonnet brane cosmology},
journal = {Classical and Quantum Gravity},
}

@article{Kuchar1976,
    author = "Kuchar, K.",
    title = "{Geometry of Hyperspace. 1.}",
    doi = "10.1063/1.522976",
    journal = "J. Math. Phys.",
    volume = "17",
    pages = "777--791",
    year = "1976"
}

@article{Teitelboim1973,
    author = "Teitelboim, Claudio",
    title = "{How commutators of constraints reflect the space-time structure}",
    doi = "10.1016/0003-4916(73)90096-1",
    journal = "Annals Phys.",
    volume = "79",
    pages = "542--557",
    year = "1973"
}

@article{ReggeTeitelboim1974,
  author = {Regge, Tullio and Teitelboim, Claudio},
  title = {Role of surface integrals in Hamiltonian formulation of general relativity},
  journal = {Annals of Physics},
  volume = {88},
  pages = {286--318},
  year = {1974},
  doi = {10.1016/0003-4916(74)90404-7}
}

@article{Carroll2001,
    author = "Carroll, Sean M.",
    title = "{The Cosmological constant}",
    eprint = "astro-ph/0004075",
    archivePrefix = "arXiv",
    reportNumber = "EFI-2000-13",
    doi = "10.12942/lrr-2001-1",
    journal = "Living Rev. Rel.",
    volume = "4",
    pages = "1",
    year = "2001"
}

@article{Weinberg1989,
  author = {Weinberg, Steven},
  title = {The cosmological constant problem},
  journal = {Reviews of Modern Physics},
  volume = {61},
  pages = {1},
  year = {1989},
  doi = {10.1103/RevModPhys.61.1}
}

@article{Isenberg1978,
  author  = {Isenberg, James A.},
  title   = {Constant mean curvature solutions of the Einstein constraint equations on closed manifolds},
  journal = {Class. Quantum Grav.}, 
  number  = {12},
  pages   = {2249–2274},
  year    = {1995},
  doi     = {https://doi.org/10.1088/0264-9381/12/9/013}
}

@article{HartleHawking1983,
  title = {Wave function of the Universe},
  author = {Hartle, J. B. and Hawking, S. W.},
  journal = {Phys. Rev. D},
  volume = {28},
  issue = {12},
  pages = {2960--2975},
  numpages = {0},
  year = {1983},
  month = {Dec},
  publisher = {American Physical Society},
  doi = {10.1103/PhysRevD.28.2960},
  url = {https://link.aps.org/doi/10.1103/PhysRevD.28.2960}
}

@book{Weinberg1972,
  author    = {Weinberg, Steven},
  title     = {Gravitation and Cosmology: Principles and Applications of the General Theory of Relativity},
  publisher = {John Wiley \& Sons},
  address   = {New York},
  year      = {1972}
}

@book{Maggiore2008,
  author    = {Maggiore, Michele},
  title     = {Gravitational Waves: Volume 1: Theory and Experiments},
  publisher = {Oxford University Press},
  address   = {Oxford},
  year      = {2008},
  doi       = {10.1093/acprof:oso/9780198570745.001.0001}
}

@book{Taylor2011,
  author    = {Taylor, Michael E.},
  title     = {Partial Differential Equations I: Basic Theory},
  edition   = {2},
  publisher = {Springer},
  address   = {New York},
  year      = {2011},
  doi       = {10.1007/978-1-4419-7055-8}
}

@book{MisnerThorneWheeler1973,
       author = {{Misner}, Charles W. and {Thorne}, Kip S. and {Wheeler}, John Archibald},
        title = "{Gravitation}",
         year = 1973,
       adsurl = {https://ui.adsabs.harvard.edu/abs/1973grav.book.....M}
}

@book{Carroll2004,
  author    = {Carroll, Sean M.},
  title     = {Spacetime and Geometry: An Introduction to General Relativity},
  publisher = {Addison-Wesley},
  address   = {San Francisco},
  year      = {2004}
}

@book{PoissonWill2014,
  author    = {Poisson, Eric and Will, Clifford M.},
  title     = {Gravity: Newtonian, Post-Newtonian, Relativistic},
  publisher = {Cambridge University Press},
  address   = {Cambridge},
  year      = {2014}
}

@book{Ringstrom2009,
  author    = {Ringström, Hans},
  title     = {The Cauchy Problem in General Relativity},
  publisher = {European Mathematical Society},
  address   = {Zürich},
  year      = {2009},
  doi       = {10.4171/053}
}

@article{Friedrich1985,
       author = {{Friedrich}, Helmut},
        title = "{On the hyperbolicity of Einstein's and other gauge field equations}",
      journal = {Communications in Mathematical Physics},
         year = 1985,
        month = dec,
       volume = {100},
       number = {4},
        pages = {525-543},
          doi = {10.1007/BF01217728},
       adsurl = {https://ui.adsabs.harvard.edu/abs/1985CMaPh.100..525F}
}

@inproceedings{York1979,
    author = "York, Jr., James W.",
    title = "{Kinematics and Dynamics of General Relativity}",
    booktitle = "{Workshop on Sources of Gravitational Radiation}",
    pages = "83--126",
    year = "1978"
}

@inproceedings{FischerMarsden1979,
  title={The initial value problem and the dynamical formulation of general relativity},
  author={Arthur E. Fischer and Jerrold E. Marsden},
  year={1979},
  publisher = {Cambridge University Press},
  pages = "138-211",
  url={https://api.semanticscholar.org/CorpusID:15669160}
}

@inproceedings{BartnikIsenberg2004,
    author = "Bartnik, Robert and Isenberg, Jim",
    title = "{The Constraint equations}",
    booktitle = "{50 Years of the Cauchy Problem in General Relativity: Summer School on Mathematical Relativity and Global Properties of Solutions of Einstein's Equations}",
    eprint = "gr-qc/0405092",
    archivePrefix = "arXiv",
    doi = "10.1007/978-3-0348-7953-8_1",
    year = "2002"
}

@book{Penrose1989,
  author    = {Penrose, Roger},
  title     = {The Emperor's New Mind: Concerning Computers, Minds, and the Laws of Physics},
  publisher = {Oxford University Press},
  address   = {Oxford},
  year      = {1989}
}

@article{CarrollChen2004,
    author = "Carroll, Sean M. and Chen, Jennifer",
    title = "{Spontaneous inflation and the origin of the arrow of time}",
    eprint = "hep-th/0410270",
    archivePrefix = "arXiv",
    reportNumber = "EFI-2004-33",
    month = "10",
    year = "2004"
}

@article{Joyce2015,
    author = "Joyce, Austin and Jain, Bhuvnesh and Khoury, Justin and Trodden, Mark",
    title = "{Beyond the Cosmological Standard Model}",
    eprint = "1407.0059",
    archivePrefix = "arXiv",
    primaryClass = "astro-ph.CO",
    doi = "10.1016/j.physrep.2014.12.002",
    journal = "Phys. Rept.",
    volume = "568",
    pages = "1--98",
    year = "2015"
}

@article{Copeland2006,
    author = "Copeland, Edmund J. and Sami, M. and Tsujikawa, Shinji",
    title = "{Dynamics of dark energy}",
    eprint = "hep-th/0603057",
    archivePrefix = "arXiv",
    doi = "10.1142/S021827180600942X",
    journal = "Int. J. Mod. Phys. D",
    volume = "15",
    pages = "1753--1936",
    year = "2006"
}

@article{RemmenCarroll2013,
  title = {Attractor solutions in scalar-field cosmology},
  author = {Remmen, Grant N. and Carroll, Sean M.},
  journal = {Phys. Rev. D},
  volume = {88},
  issue = {8},
  pages = {083518},
  numpages = {14},
  year = {2013},
  month = {Oct},
  publisher = {American Physical Society},
  doi = {10.1103/PhysRevD.88.083518},
  url = {https://link.aps.org/doi/10.1103/PhysRevD.88.083518}
}

@article{Freivogel2011,
    author = "Freivogel, Ben",
    title = "{Making predictions in the multiverse}",
    eprint = "1105.0244",
    archivePrefix = "arXiv",
    primaryClass = "hep-th",
    doi = "10.1088/0264-9381/28/20/204007",
    journal = "Class. Quant. Grav.",
    volume = "28",
    pages = "204007",
    year = "2011"
}

@article{Guth2007,
    author = "Guth, Alan H.",
    editor = "Sola, Joan",
    title = "{Eternal inflation and its implications}",
    eprint = "hep-th/0702178",
    archivePrefix = "arXiv",
    reportNumber = "MIT-CTP-3811",
    doi = "10.1088/1751-8113/40/25/S25",
    journal = "J. Phys. A",
    volume = "40",
    pages = "6811--6826",
    year = "2007"
}

@inbook{Ellis2012,
    author = "Ellis, George F. R.",
    editor = "Butterfield, Jeremy and Earman, John",
    title = "{Issues in the philosophy of cosmology}",
    booktitle = "{Philosophy of physics}",
    eprint = "astro-ph/0602280",
    archivePrefix = "arXiv",
    doi = "10.1016/B978-044451560-5/50014-2",
    pages = "1183--1285",
    year = "2006"
}

@article{Geroch1970,
    author = "Geroch, Robert P.",
    title = "{The domain of dependence}",
    doi = "10.1063/1.1665157",
    journal = "J. Math. Phys.",
    volume = "11",
    pages = "437--439",
    year = "1970"
}

@article{Senovilla1998,
    author = "Senovilla, Jos{\'e} M. M.",
    title = "{Singularity Theorems and Their Consequences}",
    eprint = "1801.04912",
    archivePrefix = "arXiv",
    primaryClass = "gr-qc",
    doi = "10.1023/A:1018801101244",
    journal = "Gen. Rel. Grav.",
    volume = "30",
    pages = "701",
    year = "1998"
}

@article{Lebowitz1993,
  author  = {Lebowitz, Joel L.},
  title   = {Boltzmann's Entropy and Time's Arrow},
  journal = {Physics Today},
  volume  = {46},
  pages   = {32--38},
  year    = {1993},
  doi     = {https://doi.org/10.1063/1.881363}
}

@book{Callender2010,
  author    = {Callender, Craig},
  title     = {The Oxford Handbook of Philosophy of Time},
  publisher = {Oxford University Press},
  address   = {Oxford},
  year      = {2010}
}

@article{GerochTraschen1987,
  title = {Strings and other distributional sources in general relativity},
  author = {Geroch, Robert and Traschen, Jennie},
  journal = {Phys. Rev. D},
  volume = {36},
  issue = {4},
  pages = {1017--1031},
  numpages = {0},
  year = {1987},
  month = {Aug},
  publisher = {American Physical Society},
  doi = {10.1103/PhysRevD.36.1017},
  url = {https://link.aps.org/doi/10.1103/PhysRevD.36.1017}
}

@article{SteinbauerVickers2006,
    author = "Steinbauer, Roland and Vickers, James A.",
    title = "{The Use of generalised functions and distributions in general relativity}",
    eprint = "gr-qc/0603078",
    archivePrefix = "arXiv",
    doi = "10.1088/0264-9381/23/10/R01",
    journal = "Class. Quant. Grav.",
    volume = "23",
    pages = "R91--R114",
    year = "2006"
}

@article{Arnowitt2008,
    author = "Arnowitt, Richard L. and Deser, Stanley and Misner, Charles W.",
    title = "{The Dynamics of general relativity}",
    eprint = "gr-qc/0405109",
    archivePrefix = "arXiv",
    doi = "10.1007/s10714-008-0661-1",
    journal = "Gen. Rel. Grav.",
    volume = "40",
    pages = "1997--2027",
    year = "2008"
}

@article{Gourgoulhon20072,
    author = "Gourgoulhon, Eric",
    title = "{3+1 formalism and bases of numerical relativity}",
    eprint = "gr-qc/0703035",
    archivePrefix = "arXiv",
    month = "3",
    year = "2007"
}

@article{Andersson2008,
    author = "Andersson, Lars and Moncrief, Vincent",
    title = "{Elliptic hyperbolic systems and the Einstein equations}",
    eprint = "gr-qc/0110111",
    archivePrefix = "arXiv",
    doi = "10.1007/s00023-003-0120-1",
    journal = "Annales Henri Poincare",
    volume = "4",
    pages = "1--34",
    year = "2003"
}

@book{Zeidler1986,
  author    = {Zeidler, Eberhard},
  title     = {Nonlinear Functional Analysis and Its Applications I: Fixed-Point Theorems},
  publisher = {Springer},
  address   = {New York},
  year      = {1985},
  isbn      = {9780387909141}
}

@article{IshamKuchar1985,
    author = "Isham, C. J. and Kuchar, K. V.",
    title = "{Representations of Space-time Diffeomorphisms. 1. Canonical Parametrized Field Theories}",
    reportNumber = "Imperial/TP/84-85/3",
    doi = "10.1016/0003-4916(85)90018-1",
    journal = "Annals Phys.",
    volume = "164",
    pages = "288",
    year = "1985"
}

@inbook{ChoquetBruhatYork1980,
    author = "Choquet-Bruhat, Y. and York, Jr., James W.",
    title = "{The Cauchy problem}",
    booktitle = "{Measurement of the Spin Dependent Total Cross Section $\Delta\sigma_L$ in $pp$ Collisions Between 200 and 600 MeV}",
    pages = "99--172",
    year = "1980"
}

@article{SmarrYork1978,
  title = {Kinematical conditions in the construction of spacetime},
  author = {Smarr, Larry and York, James W.},
  journal = {Phys. Rev. D},
  volume = {17},
  issue = {10},
  pages = {2529--2551},
  numpages = {0},
  year = {1978},
  month = {May},
  publisher = {American Physical Society},
  doi = {10.1103/PhysRevD.17.2529},
  url = {https://link.aps.org/doi/10.1103/PhysRevD.17.2529}
}

@article{BerezinKuzminTkachev1987,
  title = {Dynamics of bubbles in general relativity},
  author = {Berezin, V. A. and Kuzmin, V. A. and Tkachev, I. I.},
  journal = {Phys. Rev. D},
  volume = {36},
  issue = {10},
  pages = {2919--2944},
  numpages = {0},
  year = {1987},
  month = {Nov},
  publisher = {American Physical Society},
  doi = {10.1103/PhysRevD.36.2919},
  url = {https://link.aps.org/doi/10.1103/PhysRevD.36.2919}
}

@article{FrolovIsraelUnruh1990,
    author = "Frolov, Valeri P. and Israel, W. and Unruh, W. G.",
    title = "{Gravitational Fields of Straight and Circular Cosmic Strings: Relation Between Gravitational Mass, Angular Deficit, and Internal Structure}",
    doi = "10.1103/PhysRevD.39.1084",
    journal = "Phys. Rev. D",
    volume = "39",
    pages = "1084--1096",
    year = "1989"
}

@article{Hogg1999,
    author = "Hogg, David W.",
    title = "{Distance measures in cosmology}",
    eprint = "astro-ph/9905116",
    archivePrefix = "arXiv",
    month = "5",
    year = "1999"
}

@book{ChoquetBruhatDeWittMorette1989,
  author    = {Choquet-Bruhat, Yvonne and DeWitt-Morette, C{\'e}cile},
  title     = {Analysis, Manifolds and Physics},
  publisher = {North-Holland},
  address   = {Amsterdam},
  year      = {2000},
  doi      = {https://doi.org/10.1016/B978-0-444-50473-9.X5000-3}
}

@book{LadyzhenskayaUraltseva1968,
  author    = {Ladyzhenskaya, O. A. and Uraltseva, N. N.},
  title     = {Linear and Quasilinear Elliptic Equations},
  publisher = {Academic Press},
  year      = {1968},
  isbn      = {9780080955544}
}

@book{Aubin1982,
  author    = {Aubin, Thierry},
  title     = {Nonlinear Analysis on Manifolds. Monge--Amp\`ere Equations},
  publisher = {Springer-Verlag},
  address   = {New York},
  year      = {1982},
  volume    = {252},
  doi       = {https://doi.org/10.1007/978-1-4612-5734-9}
}

@book{McLean2000,
  author    = {McLean, William},
  title     = {Strongly Elliptic Systems and Boundary Integral Equations},
  publisher = {Cambridge University Press},
  address   = {Cambridge},
  year      = {2000},
  series    = {Cambridge University Press},
  doi       = {10.1017/CBO9780511606236},
  isbn      = {9780521663755}
}

@inproceedings{Halliwell1991,
    author = "Halliwell, Jonathan J.",
    title = "{Introductory Lectures on Quantum Cosmology}",
    eprint = "0909.2566",
    archivePrefix = "arXiv",
    primaryClass = "gr-qc",
    reportNumber = "MIT-CTP-1845",
    year = "1989"
}

@inproceedings{Kuchar1981,
    author = "Kuchar, K.",
    title = "{Canonical Methods of Quantization}",
    booktitle = "{Oxford Conference on Quantum Gravity}",
    year = "1980"
}

@article{Bartnik1988,
  author  = {Bartnik, Robert},
  title   = {Remarks on Cosmological Spacetimes and Constant Mean Curvature Surfaces},
  journal = {Commun. Math. Phys.},
  volume  = {117}, 
  pages   = {615--624},
  year    = {1988},
  doi     = { https://doi.org/10.1007/BF01218388}
}

@BOOK{EllisMaartensMacCallum2012,
       author = {{Ellis}, George F.~R. and {Maartens}, Roy and {MacCallum}, Malcolm A.~H.},
        title = "{Relativistic Cosmology}",
         year = 2012,
       adsurl = {https://ui.adsabs.harvard.edu/abs/2012reco.book.....E}
}

\end{document}